\documentclass[11pt,a4paper]{article}

\usepackage{a4wide}
\usepackage{amsmath,amssymb,amsthm}
\theoremstyle{plain}
\newtheorem{theorem}{Theorem}[section]
\newtheorem*{theorem*}{Theorem}
\newtheorem{corollary}[theorem]{Corollary}
\newtheorem{conjecture}[theorem]{Conjecture}
\newtheorem{lemma}[theorem]{Lemma}

\theoremstyle{definition}
\newtheorem{definition}[theorem]{Definition}

\newtheorem*{note*}{Note}
\newtheorem*{notes*}{Notes}
\newtheorem*{null*}{}

\newtheorem*{remark*}{Remark}

\usepackage{caption}
\usepackage[shortlabels]{enumitem}
\usepackage{eukdate}
\usepackage[T1]{fontenc}
\usepackage[pdftex]{graphics}
\usepackage[pdftex,pagebackref=true,colorlinks,linkcolor=blue,citecolor=blue]{hyperref}
\renewcommand*{\backref}[1]{}
\renewcommand*{\backrefalt}[4]{[{%
    \ifcase #1 Not cited.%
    \or Cited on page~#2.%
    \else Cited on pages #2.%
    \fi%
  }]}
\usepackage{ifthen}
\usepackage{lmodern}
\usepackage{mathpartir}
\usepackage[sc]{mathpazo}
\usepackage{mathtools}
\usepackage{mdframed}
\usepackage{natbib}
\usepackage{proof}
\usepackage{relsize}
\usepackage{stmaryrd}
\usepackage{url}
\usepackage{wrapfig}

\newcommand{\cubicalrefslist}{bezem2014model, angiulicartesian, awodey2016cubical, pitts2015nominal, birkedal2016guarded}
\newcommand{\cubicalrefs}{\cite{CoquandT:cubttc, \cubicalrefslist}}
\newcommand{\cubicalrefsnocubttc}{\cite{\cubicalrefslist}}

\newcommand{\Bool}{\text{\normalfont\texttt{\{0{,}1\}}}}

\newcommand{\coerce}{\kw{coerce}}
\newcommand{\Cof}{\kw{Cof}}

\newcommand{\comp}{\circ}

\DeclareMathOperator{\Comp}{\kw{Comp}}
\newcommand{\conc}{\mathbin{@}}
\DeclareMathOperator{\Contr}{\kw{Contr}}

\newcommand{\defeq}{\triangleq}

\newcommand{\disj}{\vee}

\DeclareMathOperator{\elim}{\kw{elim}}

\DeclareMathOperator{\Ext}{\kw{Ext}}
\newcommand{\exto}{\nearrow}

\newcommand{\False}{\bot}

\DeclareMathOperator{\Fib}{\kw{Fib}}
\DeclareMathOperator{\fil}{\kw{fill}}

\DeclareMathOperator{\fst}{\kw{fst}}

\newcommand{\fun}{\mathbin{\shortrightarrow}}

\newcommand{\I}{\kw{I}}

\newcommand{\id}{\mathit{id}}

\newcommand{\IdU}[2]{#1 \path_{\Univ} #2}
\newcommand{\IdCong}[2]{#1 \path_{\cong} #2}

\newcommand{\inv}[1]{\overline{#1}}
\DeclareMathOperator{\isFib}{\kw{isFib}}

\newcommand{\join}{\cup}
\newcommand{\kw}[1]{\mathtt{#1}}

\newcommand{\mono}{\rightarrowtail}

\renewcommand{\path}{\sim}

\newcommand{\pos}[1]{\kw{ax_{#1}}}

\newcommand{\Prop}{\Omega}

\DeclareMathOperator{\refl}{\kw{refl}}
\newcommand{\resby}{|}

\DeclareMathOperator{\snd}{\kw{snd}}
\newcommand{\src}{\kw{0}}

\newcommand{\tgt}{\kw{1}}

\newcommand{\Univ}{\mathcal{U}}

\newcommand{\y}{\mathrm{y}}

\newcommand{\by}[1]{\text{by #1}}
\newcommand{\Equiv}[2]{#1 \simeq #2}

\newcommand{\funext}{\mathit{funext}}
\newcommand{\happly}{\mathit{happly}}
\newcommand{\idtoeqv}{\mathit{idtoeqv}}
\newcommand{\isEquiv}[1]{\mathit{isEquiv}(#1)}

\newcommand{\ua}{\mathit{ua}}
\newcommand{\U}{\mathit{U}}
\newcommand{\UA}{\mathit{UA}}

\newcommand{\ct}{%
  \mathchoice{\mathbin{\raisebox{0.5ex}{$\displaystyle\centerdot$}}}%
             {\mathbin{\raisebox{0.5ex}{$\centerdot$}}}%
             {\mathbin{\raisebox{0.25ex}{$\scriptstyle\,\centerdot\,$}}}%
             {\mathbin{\raisebox{0.1ex}{$\scriptscriptstyle\,\centerdot\,$}}}
}
\newcommand{\opp}[1]{\mathord{{#1}^{-1}}}
\let\rev\opp

\title{Decomposing the Univalence Axiom}

\author{Ian Orton\footnote{Supported by a UK EPSRC
  PhD studentship, funded by grants EP/L504920/1, EP/M506485/1.}\; and Andrew M. Pitts\\
  University of Cambridge Computer Laboratory\\
  Cambridge CB3 0FD, UK\\
\texttt{$\{$Ian.Orton,Andrew.Pitts$\}$@cl.cam.ac.uk}}

\date{}

\begin{document}

\maketitle

\begin{abstract}
  This paper investigates Voevodsky's univalence axiom in intensional Martin-L\"of type theory. In particular, it looks at how univalence can be derived from simpler axioms. We first present some existing work, collected together from various published and unpublished sources; we then present a new decomposition of the univalence axiom into simpler axioms.  We argue that these axioms are easier to verify in certain potential models of univalent type theory, particularly those models based on cubical sets. Finally we show how this decomposition is relevant to an open problem in type theory.
\end{abstract}


\section{Introduction}
\label{sec:int}


Extensionality is a principle whereby two mathematical objects are deemed to be equal if they have the same observable properties. Often, formal systems for mathematics will include axioms designed to capture this principle. In the context of set theory we have the axiom of extensionality, which tells us that two sets are equal if they contain the same elements. In the context of (univalent) type theory we have Voevodsky's univalence axiom \cite[Section 2.10]{HoTT}, which tells us, roughly speaking, that two types are equal if they are isomorphic.

In axiomatic set theory the axiom of extensionality is easily formalised as a simple implication $\forall A \forall B ((\forall X (X \in A \iff X \in B)) \implies A = B)$. The converse implication follows from the properties of equality, and combining these two implications we deduce that equality of two sets is logically equivalent to them having the same elements. At this point we are done, we now have an extensionality principle for sets and nothing further needs to be assumed.

The situation is more complicated in a proof relevant setting such as intensional type theory. As with sets we can formalise the statement of interest in the language of type theory as $(A\, B : \U) \to A \simeq B \to A = B$, where $A \simeq B$ is the type of \emph{equivalences} between $A$ and $B$. We then postulate the existence of a term witnessing this statement. As before, the converse implication follows from the properties of the identity type, and hence equality of two types is logically equivalent to them being isomorphic. However, the proof relevant nature of type theory means that what we have described so far will be insufficient. We may want to know how equalities derived using the postulated term compute when passed to the eliminator for identity types. For example, if we convert them back into equivalences do we always get the same equivalence that we started with?

In univalent type theory (UTT), also known as homotopy type theory (HoTT), these problems are resolved by taking a different approach to the statement of the univalence axiom. As mentioned before the converse implication, $(A\, B : \U) \to A = B \to A \simeq B$, follows from the properties of the identity type. The approach taken in UTT is to state that for any types $A$ and $B$ this map is itself an equivalence between the types $A = B$ and $A \simeq B$. From this fact we can deduce the existence of a map in the other direction (the original implication of interest), as well as some information about how that map computes.

Merely stating that a certain canonical map is an equivalence is a very concise way to express the univalence axiom. From a mathematical point of view it is appealingly simple and yet powerful. In particular, this statement has the nice property that it is a \emph{mere proposition} \cite[Definition 3.3.1]{HoTT} and so there is no ambiguity about the term witnessing the axiom.

However, there are some disadvantages to this way of stating the univalence axiom. For example, verifying the univalence axiom in a model of type theory can be a difficult task. Fully expanded, this seemingly simple statement becomes very large, with many complex subterms. Verifying univalence directly, by computing the interpretation of the statement in the model and explicitly constructing the interpretation of its proof term, may be unfeasible. Instead, one would need to build up several intermediate results about contractibility, equivalences, and possibly new constructions such as \emph{Glueing} \cite[Section 6]{CoquandT:cubttc}, through a mixture of internal, syntactic and semantic arguments.

The contribution in this paper is a reduction of the usual statement of univalence to a collection of simpler axioms which are more easily verified in certain models of dependent type theory, particularly those based on cubical sets \cubicalrefs. Importantly, we do not propose these axioms as an alternative statement for the univalence axiom when doing mathematics in univalent type theory. These axioms are designed with the previous goal in mind and are not intended to be mathematically elegant or user-friendly.

In the rest of this paper we begin with some preliminary definitions and notational conventions (Section \ref{sec:prelim}). We then briefly discuss the univalence axiom (Section \ref{sec:univ}). These sections cover existing work. We then introduce our alternative set of axioms (Section \ref{sec:axi}), and examine their application to models of type theory (Section \ref{sec:cubical}). Finally, we propose another application of these axioms  to an open problem in UTT (Section \ref{sec:open}).

\paragraph*{\textbf{Agda formalisation}} 
This work presented in this paper is supported by two separate developments in the Agda proof assistant \cite{Agda}. The first covers the material in sections \ref{sec:prelim}-\ref{sec:axi}, where Agda is useful for precisely tracking universe levels in many of the theorems. The second covers the material in Section \ref{sec:cubical}, and builds on the development accompanying \cite{PittsAM:aximct}. In this development we use Agda to verify our constructions in the internal type theory of the cubical sets topos. The source for both can be found at \url{https://github.com/IanOrton/decomposing-univalence}.

\section{Preliminaries}\label{sec:prelim}
In most of this paper we work in intensional Martin-L\"of type theory with dependent sums and products, intentional identity types, and a cumulative hierarchy of universes $\U_0 : \U_1 : \U_2 : ...$.

We use the symbol $=$ for the identity type, $\equiv$ for definitional equality and $\defeq$ when giving definitions. Given $p : x = y$ and $q : y = z$, we write $p \ct q : x = z$ for the composition of identity proofs, and $\rev{p} : y = x$ for the inverse proof.

We also assume the principle of function extensionality, which states that two functions $f, g : \prod_{x : A} B(x)$ are equal whenever they are pointwise equal: $f \sim g \defeq \prod_{x : A} f(x) = g(x)$. That is, that there exists a term:
\begin{align*}
	\funext_{i,j}:&\prod_{A : \U_i}\;\prod_{B : A \to \U_j }\;\prod_{f, g : \Pi_{x : A}B(x)} f \sim g \to f = g
\end{align*}
for all universe levels $i, j$.

Note that in the Agda development mentioned previously we do not assume function extensionality in general, but rather we make it an explicit assumption to each theorem. This means that we can see exactly where function extensionality is used, and at which universe levels it needs to hold.

We now recall some standard definitions and results in UTT/HoTT. 

\begin{definition}[Contractibility]\label{def:contr}
	A type $A$ is said to be \emph{contractible} if the type
	\[isContr(A) \defeq \sum_{a_0 : A}\prod_{a : A} (a_0 = a)\]
	 is inhabited. Contractibility expresses the fact that a type has a unique inhabitant.
\end{definition} 

\begin{definition}[Singletons]
	Given a type $A$ and element $a : A$, we can define
	\[sing(a) \defeq \sum_{x : A} (a = x)\]
	 to be the type of elements of $A$ which are equal to $a$. It is easily shown by path induction that the type $sing(a)$ is always contractible.
\end{definition}

\begin{definition}[Equivalences]
	An \emph{equivalence} from $A \simeq B$ is a pair $(f,e)$ where $f : A \to B$ and $e$ is a proof that for every $b : B$ the fiber of $f$ at $b$ is contractible. To be precise:
	\[ A \simeq B \defeq \sum_{f : A \to B} \isEquiv{f} \]
	where
	\[ fib_f(b) \defeq \sum_{a : A} (f\; a = b)
	  \qquad\text{and}\qquad \isEquiv{f} \defeq \prod_{b : B} \mathit{isContr}(fib_f(b))\]
	for $A : \U_i$, $B : \U_j$ for any $i,j$. 
\end{definition}

A simple example of an equivalence is the identity function $\id_A : A \to A$ for any type $A$. To demonstrate that $\id_A$ is an equivalence we must show that $\prod_{a : A} \mathit{isContr}(\sum_{x : A} (a = x))$, but this is equivalent to the statement that $sing(a)$ is contractible for all $a : A$.

\section{Voevodsky's Univalence Axiom}
\label{sec:univ}

In this section we introduce Voevodsky's univalence axiom. We then present an existing result which decomposes the univalence axiom into a ``naive'' form and a computation rule. In Section \ref{sec:axi}, we will then decompose these two axioms further into five even simpler axioms.

\begin{definition}[Coerce and idtoeqv]\label{def:coerce}
	For all $i$, and types $A,B:U_i$, there is a canonical map $\idtoeqv : (A = B) \to (A \simeq B)$ which is defined by path induction on the proof $A = B$:
	\[ \idtoeqv(\refl) \defeq \id_A \]
	where $\id_A : \Equiv{A}{A}$ is the identity map regarded as an equivalence. We can also define a map $\coerce : (A = B) \to A \to B$ either by path induction, or as:
	\[ \coerce(p,a) \defeq \fst(\idtoeqv(p))(a) \]
	where $\fst$ is the first projection.
\end{definition}

\begin{definition}[Voevodsky's univalence axiom]\label{def:proper-univ}
	The univalence axiom for a universe $\U_i$ asserts that for all $A , B : \U_i$ the map $\idtoeqv : (A = B) \to (A \simeq B)$ is an equivalence.
\end{definition}

 In light of the following definition we will often refer to the univalence axiom as the \emph{proper} univalence axiom.

\begin{definition}[The naive univalence axiom]\label{def:naive-univ}
	The naive univalence axiom for a universe $\U_i$ gives, for all $A , B : \U_i$, a map from equivalences to equalities. In other words, it asserts the existence of an inhabitant of the type:
	\[ \UA_i \defeq \prod_{A, B : \U_i}\Equiv{A}{B} \to A = B \]
	When using a term $\ua : \UA_i$ we will often omit the first two arguments ($A$ and $B$). Proofs of naive univalence may also come with an associated computation rule. That is, an inhabitant of the type $\UA\beta_i(\ua)$, where:
	\[ \UA\beta_i(\ua) \defeq \prod_{A, B : \U_i}\;\prod_{f : A \to B}\;\prod_{e : \isEquiv{f}} \coerce\; (\ua(f,e)) = f \]
\end{definition}

Next, we give a result which is known in the UTT/HoTT community and has been discussed on the HoTT mailing list. However, the authors are not aware of any existing presentation of a proof in the literature. This result decomposes the proper univalence axiom into the naive version and a computation rule. First we give a lemma which generalises the core construction of this result.

\begin{lemma}\label{lemma:id-retract}
Given $X : \U_i$, $Y : X \to X \to \U_j$ and a map $f : \prod_{x, x' : X}\; x = x' \to Y(x, x')$
then $f\, x\, x'$ is an equivalence for all $x,x' : X$ iff there exists a map
  \[ g : \prod_{x, x' : X}\; Y(x,x') \to x = x' \]
such that for all $x, x' : X$ and $y : Y(x, x')$ we have $f(g(y)) = y$ (we leave the first two arguments to $f$ and $g$ implicit).
\end{lemma}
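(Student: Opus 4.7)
The forward direction is straightforward: if each $f\,x\,x'$ is an equivalence, take $g\,x\,x'\,y$ to be the first projection of the centre of contraction of the fibre $fib_{f\,x\,x'}(y)$, and its second projection is precisely the required witness $f(g(y)) = y$.

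For the backward direction, given $g$ together with the family $\beta_{x,x'}(y) : f(g(y)) = y$, the naive strategy of proving $g(f(p)) = p$ by path induction on $p$ immediately stalls: after reducing $p$ to $\refl_x$ one would need $g(f(\refl_x)) = \refl_x$, while the hypothesis only yields $f(g(f(\refl_x))) = f(\refl_x)$, which does not pin down $g(f(\refl_x))$ without already knowing that $f$ is a coherent injection. My plan is instead to fix $x : X$ and establish the equivalence via the standard total-space trick, exploiting that $sing(x) \defeq \sum_{x' : X}(x = x')$ is contractible.

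Concretely, I first show that $\sum_{x' : X} Y(x, x')$ is contractible, with centre $(x, f(\refl_x))$. For an arbitrary $(x', y)$, the first coordinate of the required path is $g(y) : x = x'$. For the second coordinate I need $\text{transport}^{Y(x,-)}(g(y), f(\refl_x)) = y$; a short path-induction lemma gives $\text{transport}^{Y(x,-)}(q, f(\refl_x)) = f(q)$ for any $q : x = x''$, so specialising to $q := g(y)$ and composing with $\beta_{x,x'}(y)$ supplies the missing equation.

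Finally, consider the total map $F : \sum_{x' : X}(x = x') \to \sum_{x' : X} Y(x, x')$ defined by $(x', p) \mapsto (x', f(p))$. Its domain is $sing(x)$, already known to be contractible, and its codomain is contractible by the previous step, so $F$ is automatically an equivalence (any map between contractibles is). By the standard fibrewise-equivalence criterion --- a family of maps is a family of equivalences iff the induced map on total spaces is --- each $f\,x\,x'$ is then an equivalence, as desired. The main obstacle is really the conceptual step at the start of the backward direction: recognising that one cannot simply invert $f$ pointwise from the section $g$, and that the argument must instead be organised around contractibility of a total space, after which everything reduces to routine transport and path-induction manipulations.
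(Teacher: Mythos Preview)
Your proof is correct, but the backward direction takes a genuinely different route from the paper's. You use the total-space trick: fix $x$, show $\sum_{x'} Y(x,x')$ is contractible using $g$ and the section witness, and then invoke the fibrewise-equivalence criterion (a map of families is a fibrewise equivalence iff the induced map on total spaces is) together with contractibility of $sing(x)$. The paper instead argues directly that $f$ is bi-invertible: it takes $g$ as a right inverse, and for the left inverse it defines $g'(y) \defeq g(y)\ct \opp{g(f(\refl))}$. With this correction, the very path-induction argument you dismissed as ``stalling'' goes through: after reducing $p$ to $\refl$, one needs $g'(f(\refl)) = \refl$, which is immediate from the groupoid law $q\ct\opp{q} = \refl$.

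The paper's approach is more elementary---it uses only the bi-invertibility characterisation of equivalences and basic groupoid identities, avoiding the total-space machinery. Your approach is perhaps more conceptual and is the standard ``fundamental theorem of identity types'' pattern; it also makes clear that the key content is really the contractibility of $\sum_{x'} Y(x,x')$. Both are short; the paper's is arguably the more direct extraction of the missing coherence from the data given.
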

\begin{proof}
For the backwards direction, assume that we are given $g$ as above. To show that $f$ is an equivalence it suffices to show that $f$ is a bi-invertible map \cite[Section 4.3]{HoTT}. To do this we must exhibit both a right and left inverse.

For the left inverse we take $g'(y) \defeq g(y)\, \ct\, \opp{g(f(\refl))}$. To see that this is indeed a left inverse to $f$ consider an arbitrary $p : x = x'$, we aim to show that $g'(f(p)) = p$. By path induction we may assume that $x \equiv x'$ and $p \equiv \refl$ and therefore we are required to show $g'(f(\refl)) = \refl$. However, since $g'(f(\refl)) \equiv g(f(\refl))\, \ct\, \opp{g(f(\refl))}$ this goal simplifies to $g(f(\refl))\, \ct\, \opp{g(f(\refl))} = \refl$ which follows immediately from the groupoid laws for identity types.

For the right inverse we take $g$ unchanged and observe that we know $f(g(y)) = y$ for all $y : Y(x,x')$ by assumption. Therefore the map $f$ is an equivalence.

For the forwards direction, given a proof $e : \isEquiv{f}$ and $y : Y(x,x')$ we have $\fst(e(y)) : \sum_{p : x = x'} f(p) = y$. We can then define $g(y)$ to be the first component of this and the second component tells us that $f(g(y)) = y$ as required.
\end{proof}

\begin{theorem}\label{naiveuniv}
Naive univalence, along with a computation rule, is logically
equivalent to the proper univalence axiom. That is, there are terms
\[
	\ua : \UA_i, \qquad
	\ua\beta : \UA\beta_i(\ua)
\]
iff for all types $A, B : \U_i$, the map
$\idtoeqv : (A = B) \to (\Equiv{A}{B})$ is an equivalence.
\end{theorem}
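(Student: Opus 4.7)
The plan is to apply Lemma~\ref{lemma:id-retract} with $X \defeq \U_i$, $Y(A,B) \defeq \Equiv{A}{B}$, and $f \defeq \idtoeqv$. Under this instantiation, the lemma reduces the question of whether $\idtoeqv$ is an equivalence to the existence of a map $g : \prod_{A, B : \U_i} \Equiv{A}{B} \to A = B$ together with a proof that $\idtoeqv(g(e)) = e$ for all $e : \Equiv{A}{B}$. So I would prove both directions by converting between the data $(g, \text{retract proof})$ demanded by the lemma and the data $(\ua, \ua\beta)$ demanded by naive univalence.

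For the forward direction (assuming $\idtoeqv$ is an equivalence), apply the ``only if'' part of Lemma~\ref{lemma:id-retract} to obtain some $g$ with $\idtoeqv(g(e)) = e$. Set $\ua \defeq g$. To get $\ua\beta$, take the first projection of this equality: since $\coerce(p) \equiv \fst(\idtoeqv(p))$ by Definition~\ref{def:coerce}, applying $\fst$ to $\idtoeqv(\ua(f,e)) = (f,e)$ yields $\coerce(\ua(f,e)) = f$, as required.

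For the backward direction (assuming $\ua : \UA_i$ and $\ua\beta : \UA\beta_i(\ua)$), set $g \defeq \ua$ and verify the retract condition $\idtoeqv(\ua(f,e)) = (f,e)$ for every $(f,e) : \Equiv{A}{B}$. Equality in the $\Sigma$-type $\Equiv{A}{B}$ reduces to an equality of first components together with an equality of second components over that path. The first-component equality $\fst(\idtoeqv(\ua(f,e))) = f$ is exactly $\coerce(\ua(f,e)) = f$, which is $\ua\beta$. The second-component equality holds because $\isEquiv{f}$ is a mere proposition for every $f$ (it is a dependent product of contractibility statements, and contractibility is a proposition; see \cite[Theorem 4.2.13]{HoTT}), so any two inhabitants (including the transported one and $e$) are automatically equal.

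I expect the main subtlety to be the second-component step: one has to make explicit that comparing two inhabitants of $\Equiv{A}{B}$ with a common first component reduces to comparing their second components in $\isEquiv{f}$, which in turn is handled by propositionality. Apart from that, both directions are essentially bookkeeping around Lemma~\ref{lemma:id-retract} and the definitional unfolding $\coerce(p) \equiv \fst(\idtoeqv(p))$.
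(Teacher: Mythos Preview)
Your proposal is correct and follows essentially the same route as the paper: instantiate Lemma~\ref{lemma:id-retract} with $X\defeq\U_i$, $Y(A,B)\defeq\Equiv{A}{B}$, $f\defeq\idtoeqv$, and bridge the gap between $\ua\beta$ (a first-component equality) and the retract condition $\idtoeqv(\ua(f,e))=(f,e)$ using that $\isEquiv{f}$ is a mere proposition. The paper's proof is terser (it only spells out the harder direction and invokes \cite[Lemma~3.5.1]{HoTT} for the $\Sigma$-type equality), but the content is the same.
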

\begin{proof}
By $\ua\beta$ we know that $\fst(\idtoeqv(\ua(f,e))) = f$ for all $(f,e) : \Equiv{A}{B}$. Now, since $\isEquiv{f}$ is a mere proposition for each $f$, we can deduce that $\idtoeqv(\ua(f,e)) = (f,e)$ by \cite[Lemma 3.5.1]{HoTT}. Therefore we simply take $X \equiv U_i$, $Y(A,B) \equiv \Equiv{A}{B}$, $f \equiv \idtoeqv$ and $g \equiv \ua$ in Lemma \ref{lemma:id-retract} to deduce the desired result.
\end{proof}


\section{A new set of axioms}
\label{sec:axi}

In this section we further decompose the univalence axiom into even simpler axioms. We show that it is equivalent to axioms (1) to (5) given in Table \ref{table:axioms}. Note that these axioms apply to a specific universe $\U_i$.

\begin{table}[!htbp]
\renewcommand{\arraystretch}{1.1}
\begin{tabular}{|cccccrcl|}
    \hline
     & Axiom && Premise(s) && \multicolumn{3}{c|}{Equality}\\
    \hline
    (1) & $\mathit{unit}$ &:&&&$A$ &=& $\sum_{a : A} 1$\\
    (2) & $\mathit{flip}$ &:&&&$\sum_{a:A}\sum_{b:B} C\; a\; b$ &=& $\sum_{b:B}\sum_{a:A} C\; a\; b$\\
    (3) & $\mathit{contract}$ &:&$isContr\; A$ &$\to$&$A$ &=& $1$\\ 
    (4) &$\mathit{unit\beta}$&:&&&$\mathit{coerce}\; \mathit{unit}\; a$ &=& $(a , *)$\\
    (5) &$\mathit{flip\beta}$&:&&&$\mathit{coerce}\; \mathit{flip}\; (a,b,c)$ &=& $(b , a , c)$\\
  \hline
\end{tabular}\\
\caption{($A, B : \U_i$, $C : A \rightarrow B \rightarrow \U_i$, $a:A$, $b:B$ and $c:C\, a\, b$, for some universe $\U_i$)}
\label{table:axioms}
\end{table}

We begin by decomposing naive univalence, $\UA_i$, into axioms (1)-(3). These axioms also follow from $\UA_i$. Recall that we are taking function extensionality as an ambient assumption.

\begin{theorem}\label{ua}
	Axioms (1)-(3) for a universe $\U_i$ are together logically equivalent to $\UA_i$.
\end{theorem}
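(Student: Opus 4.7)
The plan is to prove the two directions separately, with the backward implication being essentially immediate from naive univalence and the forward implication containing all of the substance.

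For the backward direction, I would observe that each of axioms (1)--(3) asserts an equality witnessed by an obvious equivalence: the pairing map $a \mapsto (a,*)$ for $\mathit{unit}$, the swap map $(a,b,c) \mapsto (b,a,c)$ for $\mathit{flip}$, and the unique map $A \to 1$ (which is an equivalence precisely when $A$ is contractible) for $\mathit{contract}$. Applying $\ua$ from $\UA_i$ to each such equivalence immediately yields the required equalities, so this direction is routine.

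For the forward direction, given an equivalence $(f,e) : \Equiv{A}{B}$, I would construct the required inhabitant of $A = B$ by chaining together five equalities:
\[
  A \;\stackrel{(1)}{=}\; \sum_{a:A} 1 \;\stackrel{(3)}{=}\; \sum_{a:A}\sum_{b:B}(f\,a = b) \;\stackrel{(2)}{=}\; \sum_{b:B}\sum_{a:A}(f\,a = b) \;\stackrel{(3)}{=}\; \sum_{b:B} 1 \;\stackrel{(1)}{=}\; B.
\]
The outer steps are direct applications of $\mathit{unit}$ (instantiated at $A$, and at $B$ with the resulting equality inverted). The middle step is $\mathit{flip}$ applied with the family $C(a,b) \defeq (f\,a = b)$. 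The two steps labelled $(3)$ exploit contractibility: for each $a$ the type $\sum_{b:B}(f\,a = b)$ is $sing(f\,a)$, which is always contractible; for each $b$ the type $\sum_{a:A}(f\,a = b)$ is $\mathit{fib}_f(b)$, which is contractible by the equivalence witness $e$. Applying $\mathit{contract}$ then supplies the required equality with $1$ pointwise (inverted at the earlier occurrence).

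The main obstacle is lifting these pointwise equalities to equalities of the ambient $\Sigma$-types. Given $P\,x = Q\,x$ for all $x$, function extensionality at the relevant universe level promotes this to $P = Q$ as elements of $B \to \U_i$, and congruence of identity types (applying the dependent-sum constructor viewed as a function $(B \to \U_i) \to \U_i$) then lifts this to $\sum_x P\,x = \sum_x Q\,x$. This is precisely the point at which function extensionality is used, and it is the reason that assumption has been kept ambient throughout the section; no further cleverness is required beyond checking that each contractibility hypothesis is available at the appropriate argument.
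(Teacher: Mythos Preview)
Your proposal is correct and follows essentially the same route as the paper: the same five-step chain $A = \sum_{a}1 = \sum_{a}\sum_{b}(f\,a=b) = \sum_{b}\sum_{a}(f\,a=b) = \sum_{b}1 = B$, with function extensionality used in exactly the way you describe to promote the pointwise applications of $\mathit{contract}$ to equalities of $\Sigma$-types, and the backward direction obtained by applying $\ua$ to the evident equivalences. There is nothing to add.
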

\begin{proof}
	We begin by showing the forwards direction. Assume that we are given axioms (1) to (3). We now aim to define a term $ua : \UA_i$. Given arbitrary types $A, B : \U_i$ and an equivalence $(f,e) : \Equiv{A}{B}$ we define $\ua(f,e) : A = B$ as follows:
\begin{align*} 
	\hspace{40pt}A &= \sum\limits_{a : A}\; 1 &&\by{(1)}\\
	 &= \sum\limits_{a : A}\; \sum\limits_{b : B}\; f\; a = b
	   && \by{funext and (3) on $sing (f a)$}\\
	 &= \sum\limits_{b : B}\; \sum\limits_{a : A}\; f\; a = b
	   && \by{(2)}\\
	 &= \sum\limits_{b : B}\; 1
	   && \by{funext and (3) on $fib_f(b)$ (contractible by $e$)}\\ 
	 &= B &&\by{(1)}
\end{align*}
where the proof that $A = B$ is given by the concatenation of each step of the above calculation.

The backwards direction follows from the fact that the obvious maps $A \to \sum_{a : A} 1$ and $(\sum_{a:A}\sum_{b:B} C\; a\; b) \to (\sum_{b:B}\sum_{a:A} C\; a\; b)$ are both easily shown to be bi-invertible and hence equivalences, and from the fact that any contractible type is equivalent to $1$ \cite[Lemma 3.11.3.]{HoTT}. Therefore given $\ua : \UA_i$ we simply apply it to these equivalences to get the required equalities (1)-(3).
\end{proof}

Next, we decompose the computation rule for naive univalence $\UA\beta_i$ into axioms (4) and (5). Since $\UA\beta_i$ depends on $\UA_i$ and axioms (4) and (5) depend on axioms (1) and (2) respectively, we in fact show the logical equivalence between the pair $\UA_i$ and $\UA\beta_i$, and axioms (1)-(5).


\begin{lemma}\label{lem}
	The function $\coerce$ is compositional. That is, given types $A, B, C : \U_i$, and equalities $p : A = B$ and $q : B = C$ we have $\coerce(p \ct q) = \coerce(q) \circ \coerce(p)$.
\end{lemma}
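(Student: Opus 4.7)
The plan is to reduce this equality of functions to an equality of values and then finish by path induction. Concretely, by function extensionality it suffices to show that $\coerce(p \ct q)(a) = \coerce(q)(\coerce(p)(a))$ for every $a : A$, so we fix such an $a$ and prove the pointwise statement.

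First I would do path induction on $q$. This lets us assume $C \equiv B$ and $q \equiv \refl$, so the right-hand side reduces: $\coerce(\refl)(\coerce(p)(a)) \equiv \id_B(\coerce(p)(a)) \equiv \coerce(p)(a)$ using the definitional computation rule $\coerce(\refl, x) \equiv x$ that comes from Definition \ref{def:coerce}. The goal then becomes $\coerce(p \ct \refl)(a) = \coerce(p)(a)$. A further path induction on $p$ collapses $p \ct \refl$ to $\refl$ (using that $\refl \ct \refl \equiv \refl$ definitionally), reducing the goal to $a = a$, which is witnessed by $\refl$. One then closes the outer induction and applies $\funext$ to lift the pointwise equality back to the claimed equality of functions.

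There is no real obstacle here: the lemma is purely a bookkeeping statement about the interaction of path concatenation with the canonical coercion, and double path induction trivialises it. The only mild care required is that we remain at a single universe level so that $\funext_{i,i}$ suffices, matching the ambient assumption from Section \ref{sec:prelim}.
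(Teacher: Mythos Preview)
Your proof is correct and matches the paper's approach, which simply says the result is ``straightforward by path induction on either of $p$ or $q$, or on both.'' The preliminary appeal to $\funext$ is harmless but not strictly needed: one can carry out the double path induction directly on the function-level equality, since after reducing both $q$ and $p$ to $\refl$ each side becomes $\id_A$ definitionally.
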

\begin{proof}
	Straightforward by path induction on either of $p$ or $q$, or on both.
\end{proof}

\begin{theorem}\label{main}
	Axioms (1)-(5) for a universe $\U_i$ are together logically equivalent to $\sum_{\ua : \UA_i} \mathit{UA\beta_i}(ua)$.
\end{theorem}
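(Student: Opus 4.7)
The plan builds directly on Theorem \ref{ua}, which already gives the equivalence between axioms (1)--(3) and $\UA_i$. What remains is to show that in the presence of $\UA_i$, the computation rule $\UA\beta_i(\ua)$ for the specific $\ua$ constructed in Theorem \ref{ua} is equivalent to adding axioms (4) and (5).

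The backward direction should be short: inspecting the proof of Theorem \ref{ua}, the witnesses for axioms (1) and (2) are defined as $\ua$ applied to the canonical equivalences $a \mapsto (a,*)$ and $(a,b,c) \mapsto (b,a,c)$, so $\UA\beta_i$ computes $\coerce$ of these back to the original functions, yielding axioms (4) and (5).

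For the forward direction I would take $\ua(f,e)$ to be the concatenation $p_1 \ct p_2 \ct p_3 \ct p_4 \ct p_5$ of the five equalities in Theorem \ref{ua}, and aim to prove $\coerce(\ua(f,e))(a) = f(a)$ pointwise in $a$, concluding with function extensionality. By Lemma \ref{lem} this composite transport splits as $\coerce(p_5) \circ \coerce(p_4) \circ \coerce(p_3) \circ \coerce(p_2) \circ \coerce(p_1)$. Three of the five links are pinned down directly: axiom (4) gives $\coerce(p_1)(a) = (a,*)$; axiom (5) gives $\coerce(p_3)(a,f\,a,\refl) = (f\,a,a,\refl)$; and since $p_5 = \opp{\mathit{unit}_B}$, applying Lemma \ref{lem} to $\mathit{unit}_B \ct \opp{\mathit{unit}_B} = \refl$ shows that $\coerce(p_5)$ is a left inverse of $\coerce(\mathit{unit}_B)$, which by axiom (4) sends $(f\,a,*)$ to $f\,a$.

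The main obstacle is the two middle links $p_2$ and $p_4$, which are built by funext-congruence from pointwise instances of axiom (3) applied to the contractible types $sing(f\,a)$ and $fib_f(b)$. Axiom (3) carries no computation rule, so neither $\coerce(p_2)$ nor $\coerce(p_4)$ is pinned down by an explicit formula. My plan is to sidestep this by appealing to contractibility directly: once one checks that transport along a $\sum_A$-congruence preserves the $A$-component, $\coerce(p_2)((a,*))$ has first component $a$ and second component in $\sum_{b:B} f\,a = b$, so it is propositionally equal to $(a, f\,a, \refl)$ by contractibility of $sing(f\,a)$; analogously, $\coerce(p_4)((f\,a, a, \refl))$ is propositionally equal to $(f\,a, *)$ by contractibility of $fib_f(b)$ (which uses $e$). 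These two contractibility bridges glue with the three explicit computations to give $f(a)$.
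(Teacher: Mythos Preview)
Your proposal is correct and follows the same route as the paper: decompose $\coerce(\ua(f,e))$ over the five-fold concatenation via Lemma~\ref{lem}, track the element stage by stage to reach $f(a)$, and conclude by function extensionality; the backward direction is likewise identical. The paper simply displays the chain $a \mapsto (a,*) \mapsto (a,f\,a,\refl) \mapsto (f\,a,a,\refl) \mapsto (f\,a,*) \mapsto f\,a$ without further comment, so you have in fact made explicit the one point it leaves tacit---that the two steps through axiom~(3) (which carries no $\beta$-rule) are handled by contractibility of the target fibre; note only that for $p_4$ the fibre you land in is $1$, so it is the (trivial) contractibility of $1$ that pins down the value, while the contractibility of $fib_f(b)$ from $e$ is what licenses \emph{forming} $p_4$ in the first place.
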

\begin{proof}
For the forwards direction we know from Theorem \ref{ua} that axioms (1) to (3) allow us to construct a term $\ua : \UA_i$. If, in addition, we assume axioms (4) and (5) then we can show that for all $(f,e) : \Equiv{A}{B}$ we have $\coerce(\ua(f,e)) = f$ as follows.
	
Since $\ua$ was constructed as the concatenation of five equalities then, in light of Lemma \ref{lem}, we have that coercing along $\ua(f,e)$ is equal to the result of coercing along each stage of the composite equality $ua(f,e)$. Therefore starting with an arbitrary $a : A$, we can track what happens at each stage of this process like so:
\newcommand{\goesto}{\quad\mapsto\quad}
\[ a \goesto (a,\;*) \goesto (a,\; f\; a,\; \mathit{refl}) \goesto (f\; a,\; a,\; \mathit{refl}) \goesto (f\; a,\; *) \goesto f\; a \] 
Therefore we see that for all $a : A$ we have $\coerce(\ua(f,e))(a) = f(a)$ and hence by function extensionality we have $\coerce(\ua(f,e)) = f$ as required.

For the reverse direction we assume that we are given $\ua : \UA_i$ and $\ua\beta : \UA\beta_i(\ua)$. We can now apply Theorem \ref{ua} to construct terms $\mathit{unit}$, $\mathit{flip}$ and $\mathit{contract}$ satisfying axioms (1) to (3) from $\ua$.

Since $\mathit{unit}$ and $\mathit{flip}$ were constructed by applying $\ua$ to the obvious equivalences, then by $\ua\beta$ we know that applying $\coerce$ to these equalities will return the equivalences that we started with. From this we can easily construct terms $\mathit{unit\beta}$ and $\mathit{flip\beta}$ satisfying axioms (4) and (5) respectively.
\end{proof}

\begin{corollary}\label{cor}
	Axioms (1)-(5) for a universe $\U_i$ are together logically equivalent to the proper univalence axiom for $\U_i$.
\end{corollary}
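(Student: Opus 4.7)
The corollary should follow immediately by chaining together the two equivalences already established. My plan is to observe that Theorem \ref{main} gives a logical equivalence between axioms (1)--(5) and the existence of a pair $(\ua, \ua\beta)$ inhabiting the type $\sum_{\ua : \UA_i} \UA\beta_i(\ua)$, while Theorem \ref{naiveuniv} gives a logical equivalence between the existence of such a pair and the proper univalence axiom for $\U_i$. Transitivity of logical equivalence then yields the corollary.

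Concretely, for the forward direction I would assume axioms (1)--(5), apply Theorem \ref{main} to extract terms $\ua : \UA_i$ and $\ua\beta : \UA\beta_i(\ua)$, and then feed these into Theorem \ref{naiveuniv} to conclude that $\idtoeqv : (A = B) \to (\Equiv{A}{B})$ is an equivalence for all $A, B : \U_i$. For the reverse direction I would assume the proper univalence axiom, invoke Theorem \ref{naiveuniv} to obtain $\ua$ and $\ua\beta$, and then invoke Theorem \ref{main} to obtain inhabitants of axioms (1)--(5).

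There is essentially no obstacle here, since both halves of each underlying theorem have already been proven. The only minor care point is that Theorem \ref{naiveuniv} is stated as a logical equivalence between the existence of two separate terms $\ua$ and $\ua\beta$ and the proper univalence statement, whereas Theorem \ref{main} packages these two terms together as a single inhabitant of a $\Sigma$-type; but since having the $\Sigma$-type inhabited is interderivable with having both projections inhabited, this repackaging is trivial. The proof of the corollary should therefore be just one or two sentences citing the two preceding theorems.
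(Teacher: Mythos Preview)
Your proposal is correct and matches the paper's own proof exactly: the paper simply says ``By combining Theorems \ref{naiveuniv} and \ref{main}.'' Your additional remarks about the forward and backward directions and the $\Sigma$-type repackaging are accurate elaborations of this one-line argument.
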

\begin{proof}
	By combining Theorems \ref{naiveuniv} and \ref{main}.
\end{proof}

%

\section{Applications in models of type theory}\label{sec:cubical}

In this section we discuss one reason why the result given in Corollary \ref{cor} is useful when trying to construct models of univalent type theory. Specifically, we believe that this decomposition is particularly useful for showing that a model of type theory with an interval object (e.g. cubical type theory \cite{CoquandT:cubttc}) supports the univalence axiom. We first explain why we believe this to be the case in general terms, and then give a precise account of what happens in the specific case of the cubical sets model presented in \cite{CoquandT:cubttc}. The arguments given here should translate to many similar models of type theory \cubicalrefsnocubttc.

Note that we are assuming function extensionality. Every model of univalence must satisfy function extensionality \cite[Section 4.9]{HoTT}, but it is often much easier to verify function extensionality than the proper univalence axiom in a model of type theory. In particular, function extensionality will hold in any type theory which includes an appropriate interval type, cf.~\cite[Lemma 6.3.2]{HoTT}.

Experience shows that axioms (1), (2), (4) and (5) are simple to verify in many potential models of univalent type theory. To understand why, it is useful to consider the interpretation of $\Equiv{A}{B}$ in such a model. Propositional equality in the type theory is usually not interpreted as equality in the model's metatheory, but rather as a construction on types e.g.~path spaces in models of HoTT. Therefore, writing $\llbracket X \rrbracket$ for the interpretation of a type $X$, an equivalence in the type theory will give rise to morphisms $f : \llbracket A \rrbracket \to \llbracket B \rrbracket$ and $g : \llbracket B \rrbracket \to \llbracket A \rrbracket$ which are not exact inverses, but rather are inverses modulo the interpretation of propositional equality, e.g.~the existence of paths connecting $x$ and $g(f(x))$, and $y$ and $f(g(y))$ for all $x \in \llbracket X \rrbracket, y \in \llbracket Y \rrbracket$. However, in many models the interpretations of $A$ and $\sum_{a : A} 1$, and of $\sum_{a:A}\sum_{b:B} C\; a\; b$ and $\sum_{b:B}\sum_{a:A} C\; a\; b$ will be isomorphic, i.e. there will be morphisms going back and forth which are inverses up to equality in the model's metatheory. This will be true in any presheaf model of type theory of the kind described in Section \ref{presheafmod}, and should be true more generally in any model which validates eta-rules for $1$ and $\Sigma$.

This means that we can satisfy (1) and (2) by proving that this stronger notion of isomorphism gives rise to a propositional equality between types. Verifying axioms (4) and (5) should then reduce to a fairly straightforward calculation involving two instance of this construction.

This leaves axiom (3), which captures the homotopical condition that every contractible space can be continuously deformed into a point. The hope is that verifying the previous axioms should be fairly straightforward, leaving this as the only non-trivial condition to check.

We now examine what happens in the specific case of cubical sets \cite{CoquandT:cubttc}.


\subsection{Example: the CCHM model of univalent type theory}

In this section we will examine what happens in the case of the Cohen, Coquand, Huber, M\"ortberg (CCHM) model of type theory based on cubical sets \cite{CoquandT:cubttc}. To be clear, this model is shown to validate the univalence axiom in the previously cited paper. However, here we give an alternative, hopefully simpler, proof of univalence using the decomposition given in Section \ref{sec:axi}. We start from the knowledge that cubical sets model a type theory with $Path$ types given by maps out of an interval object $\I$, and where types come equipped with a \emph{composition} operation which is closed under all type formers ($\Sigma, \Pi, Path$). From this we then show how to validate our axioms, and therefore the proper univalence axiom.

For most of this section we will work in the internal language of the cubical sets topos, using a technique developed by the authors in a previous paper \cite{PittsAM:aximct}.
We begin with a brief summary of the cubical sets model and then describe the internal language approach to working with such models. For those unfamiliar with this material we refer the reader to \cite{CoquandT:cubttc} and \cite{PittsAM:aximct} respectively for further details.

\subsubsection{The cubical sets model}
\label{presheafmod}

Cohen~\emph{et al} \cite{CoquandT:cubttc} present a model of type theory using the category $\hat{\mathcal C}$ of presheaves on the small category $\mathcal C$ whose objects are given by finite sets of symbols, written $I, J, K$, with $\mathcal C(I,J)$ being the set of maps $J \to dm(I)$, where $dm(I)$ is the free De Morgan algebra \cite{balbes1977distributive} on the set $I$. 

First we recall the standard way of constructing a presheaf model of type theory \cite{HofmannM:synsdt}, note that this is not the final model construction. Take $\hat{\mathcal{C}}$ to be the category of contexts with the types over a context $\Gamma \in \hat{\mathcal{C}}$, written $Ty(\Gamma)$, given by presheaves on $\Gamma$'s category of elements. Terms of type $A \in Ty(\Gamma)$, written $Ter(\Gamma \vdash A)$ are simply global sections of $A$. Explicitly, this means that a type $A \in Ty(\Gamma)$ is given by a family of sets $A(I,\rho)$ for every $I \in \mathcal C$ and $\rho \in \Gamma(I)$ such that for every $a \in A(I,\rho)$ and $f : J \to I$ we have $A(f)(a) \in A(J,\Gamma(f)(\rho))$ with $A(\id_I)(a) = a$ and $A(g \circ f)(a) = A(g)(A(f)(a))$. A term $a \in Ter(\Gamma \vdash A)$ is given by a family $a(I,\rho) \in A(I,\rho)$ for every $I \in \mathcal C$ and $\rho \in \Gamma(I)$ such that for all $f : J \to I$ we have $A(f)(a(I,\rho)) = a(J,\Gamma(f)(\rho))$. Following the convention in \cite{CoquandT:cubttc} we will often omit the first argument $I$ and will write functorial actions $\Gamma(f)(\rho)$ simply as $\rho f$.

These constructions all model substitution, context extension, projection, etc, in the correct way, and can be shown to form a category with families (CwF) in the sense of Dybjer \cite{DybjerP:intt}. Such a model always supports both dependent sums and products. For example, given types $A \in Ty(\Gamma)$ and $B \in Ty(\Gamma.A)$ then the dependent sum $\Sigma A B \in Ty(\Gamma)$ can be interpreted as
\[ \Sigma A B (I, \rho) \;\triangleq\; \{ (a,b) \mid a \in A(I, \rho),\; b \in B(I, \rho, a) \} \]
As stated in the previous section, any model of this kind will always have $\Sigma A (\Sigma B C)$ being strictly isomorphic to $\Sigma B (\Sigma A C)$, that is, with natural transformations in each direction which are inverses up to equality in the model's metatheory. Furthermore, assuming that the terminal type $1$ is interpreted as the terminal presheaf, then the same will be true for the types $A$ and $\Sigma A 1$. This is potentially useful when verifying axioms (1), (2), (4) and (5) for the reasons given above.

To get a model of type theory which validates the univalence axiom we restrict our attention to types with an associated \emph{composition structure} \cite[Definition 13]{CoquandT:cubttc}. We call such types \emph{fibrant} and write $FTy(\Gamma)$ for the collection of fibrant types over a context $\Gamma \in \hat{\mathcal C}$. Taking contexts, terms, type formers and substitution as before, we get a new CwF of fibrant types. We delay giving the exact definition of a composition structure until after we introduce the internal type theory of $\hat{\mathcal C}$ in the following section.

\subsubsection{The internal type theory of $\hat{\mathcal C}$}

In previous work \cite{PittsAM:aximct} the authors axiomatised the properties of the cubical sets topos needed to develop a model of univalent type theory. They then showed how many of the constructions used in the model could be replicated using the internal type theory of an elementary topos \cite{MaiettiME:modcdt}. Here we will often build on this approach, working mostly in the internal type theory. We now give a brief overview of this approach, and refer the reader to \cite{PittsAM:aximct} for full details.

We use a concrete syntax inspired by
Agda~\cite{Agda}. Dependent function types are written as
$(x:A)\fun B$ with lambda abstractions
written as $\lambda(x:A)\fun t$. We use $\{\}$ in place of $()$ to indicate the use of implicit arguments. Dependent product types are written
as $(x:A)\times B$ with the pairing operation written as
$(s,t)$.

We assume the existence of an interval object $\I$ with endpoints
$\src, \tgt : 1 \to \I$ subject to certain conditions, a class of
propositions $\Cof \mono \Omega$, closed under $\vee, \wedge$ and
$\I$-indexed $\forall$, which we call the \emph{cofibrant}
propositions and an internal full subtopos $\Univ$. Given
$\varphi : \Prop$ we write
$[ \varphi ] \defeq \{ \_ : 1 \mid \varphi \}$ for the type whose
inhabitation corresponds to the provability of $\varphi$, and given a
object $\Gamma : \Univ$ and a cofibrant property
$\Phi : \Gamma \fun \Cof$ we wrtie
$\Gamma\resby \Phi \defeq (x:\Gamma)\times[\Phi\,x]$ for the
\emph{restriction of $\Gamma$ by $\Phi$}. Given $\varphi : \Cof$, $f:[\varphi]\fun A$
and $a:A$ we write $(\varphi,f)\exto a$ for $(u:[\varphi])\fun f\,u = a$; thus
elements of this type are proofs that the partial element $f$ (with
cofibrant domain of definition $\varphi$) extends to the totally
defined element $a$.

As an example of the use of this language we now reproduce the internal definition of a fibration \cite[Definition 5.7]{PittsAM:aximct}:

\begin{definition}[CCHM fibrations]%
\label{def:cchm-fib}
  A \emph{CCHM fibration} $(A,\alpha)$ over a type $\Gamma:\Univ$ is a
  family $A:\Gamma\fun \Univ$ equipped with a fibration structure
  $\alpha : \isFib A$, where
  $\isFib : \{\Gamma:\Univ\}(A: \Gamma\fun\Univ) \fun \Univ$ is
  defined by
  \begin{align*}
    \isFib\,\{\Gamma\}\,A \defeq (e:\Bool)(p:\I\fun \Gamma) \fun \Comp
    e\, (A\comp p) 
  \end{align*}
  Here ${\Comp} : (e:\Bool)(A:\I\fun\Univ) \fun\Univ$ is the type
  of \emph{composition structures} for $\I$-indexed families:
  \begin{align*}
    \Comp e\, A \defeq 
    \begin{array}[t]{@{}l}
      (\varphi:\Cof)(f: [\varphi]\fun\Pi_\I A)\fun{}\\
      \{a_0:A\,e \mid (\varphi,f)\conc e \exto a_0\} \fun
      \{a_1: A\,\inv{e} \mid (\varphi,f)\conc\inv{e} \exto a_1\}  
    \end{array}
  \end{align*}
  where $(\varphi,f)\conc e$ is an abbreviation for the term $\lambda
  u:[\varphi].\,f\,u\,e$ of type  $[\varphi ]\fun A\,e$.
  
\end{definition}

We write $\Fib\, \Gamma = (A : \Gamma \fun \Univ) \times \isFib\, A$ for the type of fibrations over $\Gamma : \Univ$ and recall that fibrations can be reindexed $(A, \alpha)[\gamma] = (A \circ \gamma, \alpha[\gamma]) : \Fib\, \Delta$ for $\gamma : \Delta \fun \Gamma$.

\subsubsection{Paths between fibrations}

We work in the internal type theory of the cubical sets topos wherever possible, however this approach does have its limitations. In particular this internal approach is unable to describe type theoretic universes \cite[Remark 7.5]{PittsAM:aximct}. Therefore we will not be able to construct elements of the identity type on the universe (the target type of axioms (1)-(3)). Instead, we work with an (externally) equivalent notion of equality between types.

\begin{definition}[Path equality between fibrations]
  \label{defi:patebf}
  Define the type of paths between CCHM fibrations
  $\IdU{\_}{\_} : \{\Gamma:\Univ\} \fun \Fib\Gamma \fun \Fib\Gamma
  \fun \Univ_1$ by
  \[
  \IdU{(A,\alpha)}{(B,\beta)} \defeq \{ (P,\rho) : \Fib(\Gamma\times\I) \mid 
  (P,\rho)[\langle\id,\src\rangle] = (A,\alpha) \wedge
  (P,\rho)[\langle\id,\tgt\rangle] = (B,\beta)\} 
  \]
\end{definition}

To understand why this notion of path is equivalent to the usual
notion recall that the universe construction in \cite{CoquandT:cubttc} is given by the usual Hofmann-Streicher universe construction for presheaf categories~\cite{HofmannM:lifgu}. This means that there exists a type $\mathcal U \in Ty(\Gamma)$ for all $\Gamma$ given by $\mathcal U(I,\rho) \defeq FTy(\y I)$ where $\y I$ denotes the Yoneda embedding of $I$. Every (small) fibrant type $A \in FTy(\Gamma)$ has a code $\ulcorner A \urcorner \in Ter(\Gamma \vdash \mathcal U)$ and every $a \in Ter(\Gamma \vdash \mathcal U)$ encodes a type $El\,a \in FTy(\Gamma)$ such that $El\, (\ulcorner A \urcorner) = A$ and $\ulcorner El\, a \urcorner = a$ for all $a$ and $A$.

Now consider the following: externally, a path $P : \IdU{A}{B}$
corresponds to a fibration $P \in FTy(\Gamma.\mathbb{I})$ such that
$P[\langle\id,0\rangle] = A$ and $P[\langle\id,1\rangle] = B$
for some $\Gamma \in \hat{\mathcal C}$ and $A,B \in FTy(\Gamma)$.
From this data we can construct $p \in Ter(\Gamma \vdash
Path\; \Univ\; \ulcorner A \urcorner\; \ulcorner B \urcorner)$
like so:
  \[ p(\rho) \defeq \langle i \rangle\; \ulcorner P \urcorner(\rho s_i, i) \]
for $I \in \mathcal C, \rho \in \Gamma(I)$. Note that this does define a
path with the correct endpoints since substituting 0 for i we get:
  \[ (\ulcorner P \urcorner(\rho, 0))f =
        P (\rho f, 0 f) =
        P (\rho f, 0) =
        A (\rho f) =
        (\ulcorner A \urcorner (\rho))f \]
for all $f : J \to I$. The case for $i = 1$ is similar.

Conversely, given $p \in Ter(\Gamma \vdash
Path\; \Univ\; \ulcorner A \urcorner\; \ulcorner B \urcorner)$
we can define $P \in FTy(\Gamma.\mathbb{I})$ with the required properties
like so:
  \[ P(\rho, i) \defeq (p\; \rho\; i ) \id_I  \]
for $I \in \mathcal C, \rho \in \Gamma(I), i \in \mathbb{I}$. Again, note
that this has the correct properties, e.g.~at 0:
  \[ P[\langle\id,0\rangle]\;\rho =
        P (\rho, 0) =
        (p\; \rho\; 0) \; \id_I  =
        (\ulcorner A \urcorner\; \rho) \id_I =
        (El \ulcorner A \urcorner)\rho =
        A\rho \]
for all $\rho \in \Gamma(I)$. It is easily checked that these two
constructions are mutual inverses. Therefore the data described by
$\IdU{\_}{\_}$ corresponds exactly to the data required to describe a path
in the universe.

\subsubsection{The realignment lemma}

Next, we introduce a technical lemma that will be needed in the following sections.  For readers familiar with the cubical sets model, it is interesting to note that this is the only place where we use the fact that
cofibrant propositions are closed under $\I$-indexed $\forall$ \cite[Section 4.1]{CoquandT:cubttc}.

\begin{lemma}[Realignment lemma]\label{realign}
  Given $\Gamma:\Univ$ and $\Phi:\Gamma\fun\Cof$, let
  $\iota:\Gamma\resby\Phi \mono \Gamma$ be the first projection.
  For any $A:\Gamma\fun \Univ$,
  $\beta : \isFib(A \comp \iota)$ and $\alpha : \isFib{A}$, there
  exists a composition structure $\kw{realign}(\Phi,\beta,\alpha) : \isFib{A}$ such that
  $\beta = \kw{realign}(\Phi,\beta,\alpha)[\iota]$.
\end{lemma}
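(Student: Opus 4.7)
The plan is to exploit the cofibrant proposition $\psi_p \defeq \forall i : \I.\, \Phi(p\,i)$ associated to each path $p : \I \fun \Gamma$; its cofibrancy is precisely where the closure of $\Cof$ under $\I$-indexed $\forall$ (flagged just before the lemma) enters. Intuitively $\psi_p$ detects when a composition problem over $p$ already lies inside the restricted context $\Gamma \resby \Phi$ and can be solved using $\beta$, while $\alpha$ is only used to extend the solution to arbitrary $p$.

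Given $e, p, \varphi, f, a_0$ as in Definition \ref{def:cchm-fib}, observe that on $[\psi_p]$ the path $p$ factors uniquely as $\iota \comp p'$ for some $p' : \I \fun \Gamma \resby \Phi$, so $\beta(e, p')$ provides a composition operation on $A \comp p$. The standard derivation of filling from composition in the CCHM model then yields a partial filler $f^\beta : [\psi_p] \fun \Pi_\I(A \comp p)$ which extends $(\varphi, f)$, takes the value $a_0$ at $e$, and whose value at $\inv{e}$ is the $\beta$-composite $\beta(e, p')(\varphi, f, a_0)$. I amalgamate $f$ and $f^\beta$ into a combined partial element $f' : [\varphi \vee \psi_p] \fun \Pi_\I(A \comp p)$ — these are compatible on the overlap because $f^\beta$ extends $(\varphi, f)$, and $a_0$ extends $(\varphi \vee \psi_p, f')$ at $e$ by the respective properties of $f$ and $f^\beta$ — and set
\[
  \kw{realign}(\Phi,\beta,\alpha)(e,p)(\varphi,f,a_0) \defeq \alpha(e,p)(\varphi \vee \psi_p,\, f',\, a_0).
\]
This inherits the composition structure laws from $\alpha$ and by construction extends $(\varphi, f)$ at $\inv{e}$.

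For the realignment equation $\beta = \kw{realign}(\Phi,\beta,\alpha)[\iota]$, evaluate the reindexed composition at an arbitrary $q : \I \fun \Gamma \resby \Phi$. The underlying path in $\Gamma$ is $\iota \comp q$, for which $\psi_{\iota \comp q} = \forall i.\, \Phi(\iota(q\,i))$ is trivially true, so $\varphi \vee \psi_{\iota \comp q}$ is also trivially true. The extension condition on $\alpha$'s output at $\inv{e}$ then pins it to the value of $f'$ at $\inv{e}$ on any available proof; picking the canonical proof coming from $\psi_{\iota \comp q}$ gives $f^\beta$'s value at $\inv{e}$, which by construction is $\beta(e, q)(\varphi, f, a_0)$. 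Hence the two composition structures agree strictly.

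The main obstacle is that the realignment equation has to hold as a strict equality of composition structures rather than merely up to a path, which forces the derived filler $f^\beta$ to satisfy all three of its properties — extending $(\varphi, f)$, starting at $a_0$ at $e$, and ending at the $\beta$-composite at $\inv{e}$ — on the nose. This is where the standard CCHM construction of filling from composition has to be deployed carefully, and also where the assumed closure of $\Cof$ under $\I$-indexed $\forall$ is indispensable, since without $\psi_p$ as a cofibrant proposition we cannot form the extended system $(\varphi \vee \psi_p, f')$ fed to $\alpha$.
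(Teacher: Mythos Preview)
Your proposal is correct and follows essentially the same construction as the paper: both form the cofibrant proposition $\forall i.\,\Phi(p\,i)$, use the filling operation derived from $\beta$ to produce the auxiliary partial tube (your $f^\beta$, the paper's $f'$), amalgamate it with the given $f$, and feed the result to $\alpha$ with the enlarged cofibration. You additionally spell out the verification of the realignment equation $\beta = \kw{realign}(\Phi,\beta,\alpha)[\iota]$, which the paper delegates to the cited reference, but the underlying argument is the same.
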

\begin{proof} By \cite[Theorem 6.13]{PittsAM:aximct}, in which we define $\kw{realign}(\Phi,\beta,\alpha)$ by
  \begin{equation}
    \kw{realign}(\Phi,\beta,\alpha)\,e\,p\,\psi\,f\,a \defeq
    \alpha\,e\,p\,(\psi\disj(\forall(i:\I).\,\Phi\,(p\,i)))\,(f
    \cup f')\,a
  \end{equation}
  where $f' : [\forall(i:\I).\,\Phi\,(p\,i)] \fun \Pi_\I (A \circ p)$ is given
  by
  $f'\,u \defeq \fil \,e\,\beta\,(\lambda i \fun (p,
  u\,i))\,\psi\,f\,a$.
\end{proof}

In words, given a fibrant type $A$ and an alternative composition structure defined only on some restriction of $A$, then we can \emph{realign} the original composition structure so that it agrees with the alternative on that restriction.

Note that this construction is stable under reindexing in the following sense: given $\gamma : \Delta \to \Gamma$, $\Phi:\Gamma\fun\Cof$, $A:\Gamma\fun \Univ$,
  $\beta : \isFib(A \comp \iota)$ and $\alpha : \isFib{A}$ then,
  \begin{align*}
  	\kw{realign}&(\Phi,\beta,\alpha)[\gamma]\,e\,p\,\psi\,f\,a\\
  	  &= \kw{realign}(\Phi,\beta,\alpha)\,e\,(\gamma \circ p)\,\psi\,f\,a\\
  	  &= \alpha\,e\,(\gamma \circ p)\,(\psi\disj(\forall(i:\I).\,\Phi\,((\gamma \circ p)\,i)))\,(f
    \cup \fil \,e\,\beta\,(\lambda i \fun (\gamma \circ p,
  u\,i))\,\psi\,f\,a)\,a\\
  	  &= \alpha[\gamma]\,e\,p\,(\psi\disj(\forall(i:\I).\,(\Phi \circ \gamma)(p\,i)))\,(f
    \cup \fil \,e\,\beta[\langle \gamma, \id \rangle]\,(\lambda i \fun (p,
  u\,i))\,\psi\,f\,a)\,a\\
  	  &= \kw{realign}(\Phi \circ \gamma,\beta[\langle \gamma , \id \rangle],\alpha[\gamma])\,e\,p\,\psi\,f\,a
  \end{align*}
Therefore we have
\[\kw{realign}(\Phi,\beta,\alpha)[\gamma] = \kw{realign}(\Phi \circ \gamma,\beta[\langle \gamma , \id \rangle],\alpha[\gamma]) \]

\subsubsection{Fibrations are closed under isomorphism}

\begin{definition}[Strict isomorphism]\label{strictiso}
	A \emph{strict isomorphism} between two objects $A,B : \Univ$ is a pair $(f,g)$ where $f : A \to B$ and $g : B \to A$ such that $g \circ f = \id$ and $f \circ g = id$. We write $(f,g) : A \cong B$.
	
	This notion lifts to both families and fibrations, and we overload the notation $\_ \cong \_$ like so: when $A,B : \Gamma \fun \Univ$ then we take $A \cong B$ to mean $(x : \Gamma) \to A\,x \cong B\,x$ and when $A,B : \Fib\,\Gamma$ then we take $A \cong B$ to mean $\fst A \cong \fst B$.
\end{definition}

\begin{lemma}\label{isofib}
	Given a family $A : \Gamma \to \Univ$ and a fibration $(B , \beta) : \Fib\,\Gamma$, such that $A \cong B$, then we can construct an $\alpha$ such that $(A,\alpha) : \Fib\,\Gamma$.
\end{lemma}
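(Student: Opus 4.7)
The plan is to construct $\alpha$ by transporting each composition problem in $A$ through the pointwise isomorphism into $B$, solving it with $\beta$, and transporting the answer back. Write $(\sigma,\tau) : A \cong B$ for the pointwise strict isomorphism, so that $\sigma_x : A\,x \fun B\,x$ and $\tau_x : B\,x \fun A\,x$ satisfy $\tau_x \comp \sigma_x = \id$ and $\sigma_x \comp \tau_x = \id$ for every $x : \Gamma$. Given composition data $e : \Bool$, $p : \I \fun \Gamma$, $\varphi : \Cof$, $f : [\varphi] \fun \Pi_\I(A \comp p)$ and $a_0$ extending $(\varphi,f) \conc e$, I would define
\[
  \alpha\,e\,p\,\varphi\,f\,a_0 \;\defeq\; \tau_{p\,\inv{e}}\bigl(\beta\,e\,p\,\varphi\,(\lambda u\,i.\,\sigma_{p\,i}(f\,u\,i))\,(\sigma_{p\,e}(a_0))\bigr).
\]

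First I would check that the transported data is a legitimate input for $\beta$: the required extension condition $(\varphi,\lambda u\,i.\,\sigma_{p\,i}(f\,u\,i)) \conc e \exto \sigma_{p\,e}(a_0)$ unfolds to $\sigma_{p\,e}(f\,u\,e) = \sigma_{p\,e}(a_0)$ for every $u : [\varphi]$, which follows by applying $\sigma_{p\,e}$ to the hypothesis $f\,u\,e = a_0$. Writing $b_1$ for the element produced by $\beta$, the extension condition on $\beta$'s output yields $\sigma_{p\,\inv{e}}(f\,u\,\inv{e}) = b_1$ for all $u : [\varphi]$. Applying $\tau_{p\,\inv{e}}$ to both sides and using the inverse law $\tau \comp \sigma = \id$ gives $f\,u\,\inv{e} = \tau_{p\,\inv{e}}(b_1) = \alpha\,e\,p\,\varphi\,f\,a_0$, which is exactly the extension condition required of $\alpha$'s output at $\inv{e}$.

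I do not anticipate any substantive obstacle: the construction is a direct transport of composition along the strict isomorphism, and the strict inverse laws $\tau \comp \sigma = \id$ and $\sigma \comp \tau = \id$ are precisely what makes the boundary conditions go through on the nose (rather than only up to a path, which would otherwise force us into extra work). In particular, Lemma~\ref{realign} is not needed here; I expect it to be invoked in a subsequent result, where one will want the constructed composition structure on $A$ to agree with a previously given one on some cofibrant subdomain, rather than merely to exist.
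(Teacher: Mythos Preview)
Your construction is exactly the paper's: transport the composition problem along the isomorphism into $B$, solve it with $\beta$, and transport back, then verify the extension condition using $\tau\comp\sigma=\id$. The paper also remarks that only this one inverse law is used (so a section--retraction pair suffices), and, as you correctly anticipate, the realignment lemma is not invoked here but in a later strictification step.
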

\begin{proof}
	Assume that we are given $A$ and $(B,\beta)$ as above with an isomorphism $\langle f,g\rangle : A \cong B$. We can then define a composition structure for $A$ as follows:
	\[ \alpha\; e\; p\; \varphi\; q\; a_0 \defeq g\; (p\; \inv{e})\; (\beta\; e\; p\; \varphi\; (\lambda u\, i \fun f\,(p\, i)\, (q\; u\; i))\; (f\, (p\, e)\, a_0) ) \]
	This construction has the required property that, given $u : [ \varphi ]$:
\begin{align*} 
	\alpha\; e\; p\; \varphi\; q\; a_0
	  &= g\; (p\; \inv{e})\; (\beta\; e\; p\; \varphi\; (\lambda u\, i \fun f\,(p\, i)\, (q\; u\; i))\; (f\, (p\, e)\, a_0) )\\
	  &= g\; (p\; \inv{e})\; (f\, (p\; \inv{e})\, (q\; u\; \inv{e}))\\
	  &= q\; u\; \inv{e}
\end{align*}
Hence $(A,\alpha) : \Fib\,\Gamma$.
\end{proof}
Note that this proof only uses the fact that $g\, x \circ f\, x = \id$ and so in fact the lemma holds more generally in the case where $\langle f , g \rangle$ is just a section-retraction pair rather than a full isomorphism. Although here will we only use it in the context of isomorphisms.

\subsubsection{Strictification}

\begin{theorem}\label{strictify-pre}
	There exists a term:
	\[
  \kw{strictify} :
  \begin{array}[t]{@{}l}
    \{ \varphi : \Cof \}
    (A: [\varphi] \fun \Univ)
    (B: \Univ)
    (s : (u : [\varphi]) \fun (A\, u \cong B))
    {}\fun{}\\
    (B' : \Univ) \times \{s' : B' \cong B \mid \forall(u :
    [\varphi]).\; A\, u = B' \wedge s\,u = s'\}
  \end{array}
  \]
  In words, this says that given any object $B : \Univ$ and any cofibrant partial object $A : [\varphi] \fun \Univ$ such that $A$ is isomorphic to $B$ everywhere it is defined, then one can can construct a new object $B' : \Univ$ which extends $A$, is isomorphic to $B$, and this isomorphism extends the original isomorphism.
\end{theorem}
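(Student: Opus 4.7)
The plan is to construct $B'$ by a gluing-style construction at the level of types, using the cofibrancy of $\varphi$ to upgrade an isomorphism with $A\,u$ (which holds on $[\varphi]$) to a strict equality. Specifically, I would first introduce an auxiliary type
\[
B'_0 \defeq (b : B) \times \{\,g : (u : [\varphi]) \fun A\,u \mid (u : [\varphi]) \fun \fst(s\,u)(g\,u) = b\,\}
\]
and exhibit an isomorphism $B'_0 \cong B$ whose forward map is first projection, and whose inverse sends $b : B$ to the tuple $(b,\,\lambda u.\,\snd(s\,u)\,b,\,\lambda u.\,q_u)$, where $q_u$ is the instance of the inverse law $\fst(s\,u)(\snd(s\,u)\,b) = b$. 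The two inverse equations follow by unfolding and applying the two inverse laws of $s\,u$ pointwise, using that $[\varphi]$ is a subsingleton.

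Whenever $\varphi$ is witnessed by some $u$, the second component of $B'_0$ is determined by $\snd(s\,u)\,b$, so the assignment $(b,g,\_) \mapsto g\,u$ furnishes an explicit isomorphism $B'_0 \cong A\,u$. Chasing definitions further shows that this identification carries the isomorphism $B'_0 \cong B$ constructed above to the given partial iso $s\,u$, so on $[\varphi]$ we already have all the data of the conclusion up to isomorphism.

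The main obstacle is upgrading the isomorphism $B'_0 \cong A\,u$ on $[\varphi]$ to the \emph{strict} equality $A\,u = B'$ (and the equality $s\,u = s'$) demanded by the statement. My plan is to follow the realignment pattern of Lemma~\ref{realign} and \cite{PittsAM:aximct}: since $[\varphi] \mono 1$ is cofibrant, one can replace $B'_0$ by a canonically isomorphic type $B'$ that on $[\varphi]$ is defined to be literally $A\,u$, with $s'$ then forced to agree strictly with $s\,u$ there. Semantically this amounts to a pullback of $B'_0$ along $[\varphi] \mono 1$ that splices $A$ back in strictly, and stability under reindexing should follow by the same naturality argument used for Lemma~\ref{realign}. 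This last step is really the crux: the rest of the construction is bookkeeping, but the strictification principle is what the statement is essentially abstracting from the topos.
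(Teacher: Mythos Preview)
Your proposal has a genuine gap at the step you yourself flag as ``the crux.'' The paper does not prove this theorem internally at all: its proof is simply a citation to \cite[Theorem~8.4]{PittsAM:aximct}, because strictifying a partial isomorphism to a strict equality of types is \emph{not} derivable in the internal type theory of the topos. It is an external, model-specific fact (holding, e.g., in presheaf toposes by a stage-wise case split on whether $\varphi$ holds). Your construction of $B'_0$ is fine and does give a type isomorphic to $B$ and, on $[\varphi]$, isomorphic to $A\,u$; but to pass from that isomorphism to a type \emph{equal} to $A\,u$ on $[\varphi]$ you are invoking precisely the principle the theorem asserts, so the argument is circular.

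Your appeal to Lemma~\ref{realign} does not close the gap: that lemma realigns \emph{composition structures} on a fixed underlying family $A : \Gamma \fun \Univ$; it does not alter the family itself, and cannot manufacture an equality of types from an isomorphism. Indeed, in the paper's Theorem~\ref{strictify} the two ingredients are used in sequence and for different purposes---$\kw{strictify}$ (the present theorem) to correct the underlying type, then $\kw{realign}$ to correct the fibration structure---so neither subsumes the other. The detour through $B'_0$ therefore buys nothing: once you concede that the final step needs an external argument, that same external argument applies directly to the original data $A$, $B$, $s$.
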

\begin{proof}
  See \cite[Theorem 8.4]{PittsAM:aximct} for a proof that this
  property holds in the cubical sets model, and more generally in many
  other models (classically, in all presheaf models).
\end{proof}

We now lift this strictification property from objects to fibrations.

\begin{theorem}\label{strictify}
	Given $\Gamma : \Univ$ and $\Phi : \Gamma \to \Cof$, a partial fibration $A : \Fib(\Gamma \resby \Phi)$ and a total fibration $B : \Fib\,\Gamma$ with $iso : A \cong B[\iota]$, we can construct a new type and isomorphism:
	\[A' : \Fib(\Gamma) \qquad \text{ and } \qquad iso'\; :\; A' \cong B \]
such that
\[ A'[\iota] = A \qquad\text{ and }\qquad iso' \circ \iota = iso \]
where $\iota$ is the inclusion $\Gamma\resby\Phi \mono \Gamma$.
\end{theorem}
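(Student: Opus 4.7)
The plan is to build $A'$ in three stages: first construct the underlying family using Theorem \ref{strictify-pre} pointwise, then equip it with \emph{some} composition structure via Lemma \ref{isofib}, and finally realign that structure so it matches $\snd(A)$ on the restriction $\iota$.

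\medskip

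First I would apply Theorem \ref{strictify-pre} fibrewise. For each $x : \Gamma$, the cofibrant proposition $\Phi\,x$, the partial object $\lambda u \fun \fst(A)\,(x,u) : [\Phi\,x] \fun \Univ$, the total object $\fst(B)\,x : \Univ$, and the partial isomorphism $\lambda u \fun iso\,(x,u)$ satisfy the hypotheses. This yields a family $A'_0 : \Gamma \fun \Univ$ together with an isomorphism $iso' : A'_0 \cong \fst(B)$ such that $A'_0 \circ \iota = \fst(A)$ and $iso' \circ \iota = iso$ on the nose.

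\medskip

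Second, since $A'_0$ is pointwise isomorphic to the fibrant family $\fst(B)$, I would invoke Lemma \ref{isofib} with $B \defeq \fst(B)$, $\beta \defeq \snd(B)$, and the isomorphism $iso'$ to obtain a composition structure $\alpha'_0 : \isFib\,A'_0$. This gives a first candidate fibration $(A'_0,\alpha'_0) : \Fib\,\Gamma$, but in general there is no reason for $\alpha'_0[\iota]$ to coincide with $\snd(A)$, so the equation $A'[\iota] = A$ would fail at the composition-structure component.

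\medskip

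This is where the Realignment Lemma \ref{realign} does the heavy lifting: applying it with $\beta \defeq \snd(A) : \isFib(A'_0 \circ \iota)$ (using $A'_0 \circ \iota = \fst(A)$) and $\alpha \defeq \alpha'_0 : \isFib\,A'_0$ produces
\[
\alpha' \defeq \kw{realign}(\Phi,\snd(A),\alpha'_0) : \isFib\,A'_0
\]
satisfying $\alpha'[\iota] = \snd(A)$. Setting $A' \defeq (A'_0,\alpha')$ and keeping $iso'$ as the isomorphism to $B$, the required equalities follow: on underlying families $A'[\iota] = (A'_0\circ\iota,\alpha'[\iota]) = (\fst(A),\snd(A)) = A$, and the equation $iso' \circ \iota = iso$ was arranged already in the first step (and under Definition \ref{strictiso} isomorphisms of fibrations are just isomorphisms of their underlying families, so nothing more is needed).

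\medskip

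The main obstacle is the second step: although Theorem \ref{strictify-pre} gives a strict extension of the \emph{types}, it says nothing about composition structures, and naive constructions (for instance, transporting $\snd(B)$ along $iso'$) will not in general restrict to $\snd(A)$ on $\Gamma\resby\Phi$. The key insight is that realignment is exactly designed to repair this mismatch while preserving fibrancy, and the stability of $\kw{realign}$ under reindexing noted after Lemma \ref{realign} ensures that the construction is natural enough to be used inside the CwF structure of fibrant types.
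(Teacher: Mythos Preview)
Your proposal is correct and follows essentially the same approach as the paper: apply $\kw{strictify}$ fibrewise to obtain the underlying family and isomorphism, use Lemma~\ref{isofib} to transport $\snd(B)$ to a provisional composition structure on the new family, and then invoke Lemma~\ref{realign} to realign this structure so that it restricts to $\snd(A)$ over $\Gamma\resby\Phi$. The paper's proof is organised identically (with $\alpha'_{pre}$ playing the role of your $\alpha'_0$), so there is nothing substantive to add.
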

\begin{proof}
	Given $\Gamma : \Univ$, $\Phi : \Gamma \to \Cof$, $(A,\alpha) : \Fib(\Gamma \resby \Phi	)$ and $(B,\beta) : \Fib\,\Gamma$ with $iso : A \cong B \circ \iota$, we define $A'$, $iso'$ as:
\[A'\, x \defeq \fst(\kw{strictify}(A(x,\_), B\,x , iso(x,\_))) \]
\[ iso'\, x \defeq \snd(\kw{strictify}(A(x,\_), B\,x , iso(x,\_))) \]
Now consider the equalities that are required to hold. From the properties of $\kw{strictify}$ we already have that $A' \circ \iota = A$ and $iso' \circ \iota = iso$. Therefore we just need to define a composition structure $\alpha' : \isFib\,A'$ such that $\alpha'[\iota] = \alpha$.

Since $A' \cong B$ and $\beta : \isFib\; B$ we can use Lemma \ref{isofib} to deduce that $A'$ has a composition structure, which we call $\alpha'_{pre}$. We then define $\alpha' \defeq \kw{realign}(\Phi,\alpha,\alpha'_{pre})$ using Lemma \ref{realign}.

\end{proof}

\subsubsection{Misaligned paths between fibrations}

We now introduce an new relation between fibrations which we call a misaligned path. This is similar to the notion of path between fibrations introduced in Definition \ref{defi:patebf}, except that rather than being equal to $A$ and $B$ at the endpoints, the path only need be isomorphic to $A$ and $B$ at the endpoints.

\begin{definition}[Misaligned path equality between fibrations]
  \label{defi:patebf-mis}
  Define the type of misaligned paths between CCHM fibrations
  $\IdCong{\_}{\_} : \{\Gamma:\Univ\} \fun \Fib\Gamma \fun \Fib\Gamma
  \fun \Univ_1$ by
  \[
  \IdCong{(A,\alpha)}{(B,\beta)} \defeq ((P,\rho) : \Fib(\Gamma\times\I)) \times 
  (A \cong P \circ \langle\id,\src\rangle) \times
    (B \cong P \circ \langle\id,\tgt\rangle) 
  \]
\end{definition}

We can show that every misaligned path can be improved to a regular path between fibrations. First, we introduce a new construction on fibrations.

\begin{definition}
	Given fibrations $A, B : \Fib\,\Gamma$ we define a new fibration
		\[ A \veebar B \;:\; \Fib((\Gamma \times \I) \resby \Phi) \qquad\text{ where }\qquad \Phi(x,i) \defeq (i = \src) \vee (i = \tgt) \]
	given by $(A , \alpha) \veebar (B , \beta) \defeq (C, \gamma)$ where
		\begin{align*}
		  C &: (\Gamma\times\I) \resby \Phi \fun \Univ\\
		  C&\,((x,i),u) \defeq ((\lambda \_ : [i=\src] \fun A\,x) \join (\lambda
		     \_ : [i=\tgt] 
		     \fun B\,x))\, u
		\end{align*}
	Here $C$ is a sort of disjoint union of the families $A$ and $B$, observing that $(\Gamma \times \I)\resby \Phi \cong \Gamma + \Gamma$ then we can think of $C$ as essentially being $[A,B] : \Gamma + \Gamma \to \Univ$.
	
	To see that $C$ is fibrant we observe that the interval $\I$ is internally connected in the sense of $\pos{1}$ in \cite[Figure 1]{PittsAM:aximct}. This means that any path $p : \I \fun (\Gamma \times \I)\resby\Phi$ must either factor as $p = \langle p' , \src, * \rangle$ or as $p = \langle p' , \tgt, *\rangle$. Therefore any composition problem for $C$ must lie either entirely in $A$, in which case we use $\alpha$ to construct a solution, or entirely in $B$, in which case we use $\beta$. For further detail we refer the reader to \cite[Theorem 7.3]{PittsAM:aximct} where the family $C$ occurs as an intermediate construction.
\end{definition}
\begin{definition}
		Given $\Gamma:\Univ$, $A, B : \Fib\,\Gamma$ and $D : \Fib(\Gamma\times\I)$ with isomorphisms $iso_{\src} : A \cong D[\langle\id,\src\rangle]$ and $iso_{\tgt} : B \cong D[\langle\id,\tgt\rangle]$ then define $iso_{\src} \veebar iso_{\tgt} : A \veebar B \cong D[\iota]$ as follows. Given $(x,i,u) : (\Gamma \times \I) \resby \Phi$:
\begin{align*}
	(iso_{\src} \veebar iso_{\tgt})\, (x,i,u) &: (\fst (A \veebar B))\; (x,i,u) \to (\fst D)\;(x,i)\\
	(iso_{\src} \veebar iso_{\tgt})\, (x,i,u) &\defeq \left\{\begin{matrix}
 iso_{\src}\,x & \text{when } u : [ i = \src ] \\ 
 iso_{\tgt}\,x & \text{when } u : [ i = \tgt ]
\end{matrix}\right.
\end{align*}
\end{definition}
Observe that for all $A, B : \Fib\,\Gamma$ we have $(A \veebar B)[\langle \id , \src, * \rangle] = A$ and $(A \veebar B)[\langle \id , \tgt, * \rangle] = B$, and for all $iso_{\src} : A \cong D[\langle\id,\src\rangle]$ and $iso_{\tgt} : B \cong D[\langle\id,\tgt\rangle]$ we have $(iso_{\src} \veebar iso_{\tgt}) \circ \langle \id , \src, * \rangle = iso_{\src}$ and $(iso_{\src} \veebar iso_{\tgt}) \circ \langle \id , \tgt, * \rangle = iso_{\tgt}$. We now use this construct to show the following result:

\begin{lemma}\label{improve}
	There exists a function
	  \[ \kw{improve} : \{\Gamma:\Univ\}\{A\; B : \Fib\Gamma\} \fun \IdCong{A}{B} \fun \IdU{A}{B} \]
\end{lemma}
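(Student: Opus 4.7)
The plan is to combine the $\veebar$ construction with the strictification theorem (Theorem \ref{strictify}). Informally, a misaligned path gives a fibration $P$ over $\Gamma \times \I$ together with isomorphisms from $A$ and $B$ to the two endpoints of $P$; the $\veebar$ operation packages $A$ and $B$ into a single partial fibration on the restriction $(\Gamma \times \I)\resby\Phi$ (with $\Phi(x,i) \defeq (i=\src) \vee (i=\tgt)$) that is isomorphic to the restriction of $P$; strictification then replaces $P$ by an isomorphic fibration $P'$ whose restriction to the boundary is \emph{strictly} equal to $A \veebar B$, which by the boundary observations of the $\veebar$ construction means exactly that $P'[\langle\id,\src\rangle] = A$ and $P'[\langle\id,\tgt\rangle] = B$.

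Concretely, given $((P,\rho), iso_{\src}, iso_{\tgt}) : \IdCong{(A,\alpha)}{(B,\beta)}$, I would first form the partial fibration $A \veebar B : \Fib((\Gamma\times\I)\resby\Phi)$ and the combined isomorphism
\[
iso_{\src} \veebar iso_{\tgt} \;:\; A \veebar B \;\cong\; (P,\rho)[\iota],
\]
where $\iota : (\Gamma\times\I)\resby\Phi \mono \Gamma\times\I$ is the first projection. I would then apply Theorem \ref{strictify} with partial fibration $A \veebar B$, total fibration $(P,\rho)$, and this isomorphism, producing a fibration $(P',\rho') : \Fib(\Gamma\times\I)$ together with an isomorphism $iso' : P' \cong P$ such that $P'[\iota] = A \veebar B$ and $iso' \circ \iota = iso_{\src} \veebar iso_{\tgt}$.

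It then remains to check the two boundary equalities required by Definition \ref{defi:patebf}. Since $\langle\id,\src\rangle : \Gamma \to \Gamma\times\I$ factors as $\iota \circ \langle\id,\src,\ast\rangle$ (where the second factor lands in the restriction because $\src : [i=\src]$ witnesses $\Phi$), reindexing gives
\[
(P',\rho')[\langle\id,\src\rangle] = (P',\rho')[\iota][\langle\id,\src,\ast\rangle] = (A\veebar B)[\langle\id,\src,\ast\rangle] = (A,\alpha),
\]
using the observation, stated immediately after the definition of $\veebar$, that $(A \veebar B)[\langle\id,\src,\ast\rangle]$ is strictly equal to $(A,\alpha)$. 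The analogous calculation at $\tgt$ gives $(P',\rho')[\langle\id,\tgt\rangle] = (B,\beta)$, so $(P',\rho') \in \IdU{A}{B}$ and we set $\kw{improve}((P,\rho), iso_{\src}, iso_{\tgt}) \defeq (P',\rho')$.

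The main obstacle is not any single calculation but ensuring that all the strict equalities line up: the $\veebar$ construction must produce \emph{definitionally} the right types at the two endpoints (this relies on the connectedness argument justifying the fibration structure, which we have already borrowed from Pitts--Orton), and strictification must transport these equalities through the inclusion $\iota$. Once the reindexing identity $\langle\id,\src\rangle = \iota \circ \langle\id,\src,\ast\rangle$ (and its $\tgt$-analogue) is in hand, the proof is essentially a bookkeeping exercise chaining together the boundary clause of $\veebar$ and the extension clause of Theorem \ref{strictify}.
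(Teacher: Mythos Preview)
Your proposal is correct and follows essentially the same argument as the paper: form $A \veebar B$ with the combined isomorphism $iso_{\src}\veebar iso_{\tgt}$, apply Theorem~\ref{strictify} to obtain $P'$ with $P'[\iota] = A\veebar B$, and then factor $\langle\id,\src\rangle$ and $\langle\id,\tgt\rangle$ through $\iota$ to read off the required endpoint equalities. The only difference is that you spell out a few more of the strict-equality checks than the paper does.
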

\begin{proof}
	Take $\Gamma:\Univ$, $A, B : \Fib\Gamma$ and $(P,iso_{\src},iso_{\tgt}) : \IdCong{A}{B}$ and observe that
	$iso_{\src} \veebar iso_{\tgt} : A \veebar B \cong P[\iota]$. Therefore we can use Theorem \ref{strictify} to strictify $P$ in order to get $P' : \Fib(\Gamma \times \I)$ such that $P'[\iota] = A \veebar B$, where $\iota$ is the restriction $(\Gamma\times\I)\resby\Phi \fun \Gamma \times \I$. Now consider reindexing $P'$ along $\langle \id, \src \rangle : \Gamma \to \Gamma \times \I$ we get:
\[ P'[\langle \id, \src \rangle]
     = P'[\iota \circ \langle \id, \src, * \rangle]
     = P'[\iota][\langle \id, \src, * \rangle]
     = (A \veebar B)[\langle \id, \src, * \rangle]
     = A
     \]
and similarly $P'[\langle \id, \tgt \rangle] = B$. Therefore we have $P' : \IdU{A}{B}$ as required.
\end{proof}

\subsubsection{Function extensionality}

As discussed previously, function extensionality holds straightforwardly in any type theory which includes an interval object/type with certain computational properties, cf.~\cite[Lemma 6.3.2]{HoTT}. See \cite[Remark 5.16]{PittsAM:aximct} and \cite[Section 3.2]{CoquandT:cubttc} for a proof in the case of cubical type theory.

\subsubsection{Axioms (1), (2), (4) and (5)}

As discussed previously, we can satisfy axioms (1) and (2) by showing that there is a way to construct paths between strictly isomorphic (fibrant) types $A,B : \Fib\,\Gamma$.

\begin{theorem}\label{isovalence}
	Given fibrations $A , B : \Fib\,\Gamma$ with $iso : A \cong B$ we can construct a path $\kw{isopath}(iso) : \IdU{A}{B}$.
\end{theorem}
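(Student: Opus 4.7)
The plan is to build a misaligned path whose underlying fibration is constantly $B$, then invoke Lemma \ref{improve} to promote it to a genuine path.

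Concretely, let $\pi : \Gamma \times \I \fun \Gamma$ be the first projection and set $P \defeq B[\pi] : \Fib(\Gamma \times \I)$, the constant-in-$\I$ reindexing of $B$. Then by functoriality of reindexing we have
\[
P[\langle\id,\src\rangle] \;=\; B[\pi \circ \langle\id,\src\rangle] \;=\; B \qquad\text{and}\qquad P[\langle\id,\tgt\rangle] \;=\; B,
\]
so the given $iso : A \cong B$ serves as an isomorphism $A \cong P[\langle\id,\src\rangle]$, while the identity isomorphism $\id_B : B \cong B$ serves as $B \cong P[\langle\id,\tgt\rangle]$. Packaging these together yields a misaligned path
\[
(P,\, iso,\, \id_B) \;:\; \IdCong{A}{B}.
\]

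Finally, define $\kw{isopath}(iso) \defeq \kw{improve}(P, iso, \id_B) : \IdU{A}{B}$ via Lemma \ref{improve}. This gives a fibration over $\Gamma \times \I$ that is strictly equal to $A$ at $\src$ and to $B$ at $\tgt$, as required.

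The main obstacle has already been absorbed into the infrastructure: the real work of turning isomorphic endpoints into \emph{definitionally} equal endpoints is precisely what Lemma \ref{improve} accomplishes, via the $\veebar$ construction, Theorem \ref{strictify} (strictification lifted to fibrations), and the realignment lemma. All that remains here is the trivial observation that a constant family between $A$ and $B$ provides the required misaligned path, which makes this theorem little more than a direct application of the preceding machinery.
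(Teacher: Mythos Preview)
Your proof is correct and matches the paper's own argument essentially verbatim: the paper also takes the constant fibration $B[\fst]$ over $\Gamma\times\I$, uses $iso$ at $\src$ and the identity isomorphism at $\tgt$ to form a misaligned path, and then applies $\kw{improve}$.
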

\begin{proof}
Given $A,B,f,g$ as above, let $B' \defeq B[\fst] : \Fib(\Gamma \times \I)$ and note that $iso : A \cong B'[\langle \id, \src \rangle ]$ and $\id : B \cong B'[\langle \id, \tgt \rangle]$ where $\id$ is the obvious isomorphism $B \cong B$. Therefore we can define
  \[ \kw{isopath}(iso) \defeq \kw{improve}(B[\fst], iso, \id) : \IdU{A}{B} \]
  as required. Note that, in this case, $\kw{improve}$ will in fact only improve $B[\fst]$ at $\src$, since at $\tgt$ we improve along the identity, which does nothing.
\end{proof}

\begin{corollary}\label{cor12cubsets}
	Axioms (1) and (2) hold in the cubical sets model.
\end{corollary}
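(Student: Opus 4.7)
The plan is to reduce both axioms to applications of Theorem \ref{isovalence}, which already does the real work of promoting a strict isomorphism of fibrations to a path between them in the sense of Definition \ref{defi:patebf}. Recall from Section \ref{presheafmod} that the CwF structure on $\hat{\mathcal{C}}$ interprets $\Sigma$-types pointwise as $\Sigma A B(I,\rho) = \{(a,b) \mid a \in A(I,\rho),\ b \in B(I,\rho,a)\}$ and the terminal type $1$ as the terminal presheaf. Consequently the families underlying $A$ and $\Sigma_{a:A}1$ on the one hand, and $\Sigma_{a:A}\Sigma_{b:B}C\,a\,b$ and $\Sigma_{b:B}\Sigma_{a:A}C\,a\,b$ on the other, are related by maps which are literally mutual inverses on the nose, not merely up to path equality.

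For axiom (1), I would take an arbitrary fibration $(A,\alpha) : \Fib\,\Gamma$, form the fibration $(\Sigma A 1, \sigma) : \Fib\,\Gamma$ using the standard closure of fibrations under $\Sigma$, and observe that the pair
\[
f \defeq \lambda x\,a.\,(a,*) \qquad g \defeq \lambda x\,(a,*).\,a
\]
defines a strict isomorphism $(A,\alpha) \cong (\Sigma A 1, \sigma)$ in the sense of Definition \ref{strictiso}: the two equations $g\,x \circ f\,x = \id$ and $f\,x \circ g\,x = \id$ hold by reflexivity in the metatheory, because the interpretation of $1$ has a unique inhabitant $*$ at every stage. Applying Theorem \ref{isovalence} then yields a path $\kw{isopath}(\langle f,g\rangle) : \IdU{(A,\alpha)}{(\Sigma A 1, \sigma)}$, which (by the correspondence between $\IdU{\_}{\_}$ and $Path$ in the universe established earlier in this section) is exactly what is required to inhabit $\mathit{unit}$.

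For axiom (2), the argument is identical in shape: given $A, B : \Fib\,\Gamma$ and $C : \Fib(\Gamma.A.B)$, the swap maps
\[
f \defeq \lambda x\,(a,b,c).\,(b,a,c) \qquad g \defeq \lambda x\,(b,a,c).\,(a,b,c)
\]
are mutually inverse on the nose, again by the pointwise definition of $\Sigma$. Theorem \ref{isovalence} then delivers a path in $\IdU{\_}{\_}$ between the two iterated $\Sigma$-fibrations, inhabiting $\mathit{flip}$.

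There is essentially no obstacle here: all the real technical content — strictification (Theorem \ref{strictify}), the $\veebar$ construction, the realignment lemma, and the improvement of misaligned paths to honest paths — was already consumed in proving Theorem \ref{isovalence}. The only thing one might want to spell out is the (routine) verification that the pointwise interpretation of $\Sigma$ and $1$ really does make the candidate maps $f$ and $g$ strict inverses, which follows from $\eta$ for products and the fact that the terminal presheaf has a unique section over every object.
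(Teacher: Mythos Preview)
Your approach is essentially the same as the paper's: exhibit the obvious strict isomorphisms and apply Theorem~\ref{isovalence} to obtain the required paths in $\IdU{\_}{\_}$.

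There is, however, one point the paper makes that you omit: to interpret axioms~(1) and~(2) as \emph{terms} of the type theory (rather than just a construction at a fixed context~$\Gamma$), one must check that $\kw{isopath}$ is stable under reindexing, i.e.\ that the construction commutes with substitution. The paper addresses this explicitly, observing that all the ingredients (strictification, closure under isomorphism, the $\veebar$ construction) are performed fibrewise and hence automatically stable, with the sole exception of $\kw{realign}$, whose stability was established separately after Lemma~\ref{realign}. Without this check, you have built a path over each fixed $\Gamma$, but have not yet shown that these assemble into an interpretation of the axiom schema.
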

\begin{proof}
	The obvious isomorphisms $A \cong A \times 1$ and $\sum_{a:A}\sum_{b:B} C\; a\; b \cong \sum_{b:B}\sum_{a:A} C\; a\; b$ are both clearly strict isomorphisms in the sense of Definition \ref{strictiso}. Therefore we can construct the required paths $\IdU{A}{(A \times 1)}$ and $\IdU{(\sum_{a:A}\sum_{b:B} C\; a\; b)}{(\sum_{b:B}\sum_{a:A} C\; a\; b)}$. Hence axioms (1) and (2) hold. 
	
	Note that in order interpret axioms (1) and (2) using $\kw{isopath}$ we need to know that $\kw{isopath}$ is stable under reindexing (substitution in the type theory). This will be the case because most of the constructions used to define it (strictification, closure under isomorphism, etc) are all performed fiberwise and hence will be stable under reindexing. The only exception is the realignment lemma, which redefines the entire composition structure. However, we previously showed $\kw{realign}$ to be stable under reindexing. Therefore $\kw{isopath}$ will also be stable under reindexing.
\end{proof}

We have seen that we can easily satisfy axioms (1) and (2) in the cubical sets model. However, we also need to know what happens when we coerce along these equalities. This can be stated in general for any strictly isomorphic types.

\begin{theorem}\label{coerceiso}
	Given fibrations $(A,\alpha),(B,\beta) : \Fib\,\Gamma$ with $\langle f, g \rangle : A \cong B$, coercing along $\kw{isopath}(\langle f, g \rangle)$ is (propositionally) equal to applying $f$.
\end{theorem}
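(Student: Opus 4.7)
The plan is to unfold $\coerce(\kw{isopath}(\langle f,g\rangle))$ applied to an arbitrary $a : A(x)$ one layer at a time, and to match the final result with $f(a)$ up to a path in $B$. Recall that coercion along a path $P : \IdU{A}{B}$ is by definition the composition structure of $P$ applied at $e = \src$, empty cofibration $\False$, and path $p \defeq \lambda i.\,(x,i) : \I \fun \Gamma \times \I$, sending $a : A(x)$ to an element of $B(x)$. By Theorem \ref{isovalence}, the underlying fibration of $\kw{isopath}(\langle f,g\rangle)$ is $\kw{improve}(B[\fst], \langle f,g\rangle, \id)$, i.e.\ the strictification $A' : \Fib(\Gamma\times\I)$ of $B[\fst]$ along the misaligned isomorphism $\langle f,g\rangle \veebar \id : A \veebar B \cong B[\fst][\iota]$, equipped with the composition structure $\alpha' \defeq \kw{realign}(\Phi,\,\alpha \veebar \beta,\,\alpha'_{\mathrm{pre}})$ from Lemma \ref{realign}, where $\alpha'_{\mathrm{pre}}$ is produced by Lemma \ref{isofib} from the isomorphism $\langle \phi, \psi \rangle : A' \cong B[\fst]$ coming out of $\kw{strictify}$ and the composition structure $\beta[\fst]$.

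On the specific path $p = \lambda i.\,(x,i)$, the auxiliary cofibrant proposition $\forall(i:\I).\,\Phi(p\,i) = \forall(i:\I).\,(i = \src \disj i = \tgt)$ is equal to $\False$ in $\Cof$, since the interval properly contains values beyond its endpoints. Consequently the formula defining $\kw{realign}$ collapses to give $\alpha'(\src)\,p\,\False\,\_\,a = \alpha'_{\mathrm{pre}}(\src)\,p\,\False\,\_\,a$. Unfolding $\alpha'_{\mathrm{pre}}$ from the proof of Lemma \ref{isofib} and using $\beta[\fst](\src)(\lambda i.\,(x,i)) = \beta(\src)(\lambda i.\,x)$, this equals
\[ \psi(x,\tgt)\,\bigl(\beta(\src)(\lambda i.\,x)(\False)(\_)(\phi(x,\src)(a))\bigr). \]
By the strict extension clause of Theorem \ref{strictify}, the isomorphism $\langle \phi, \psi \rangle$ agrees with $\langle f,g \rangle \veebar \id$ after restriction along $\iota$, and the endpoint equations recorded just after the definition of $iso_{\src} \veebar iso_{\tgt}$ give $\phi(x,\src) = f_x$ and $\psi(x,\tgt) = \id_{B(x)}$. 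The whole expression therefore simplifies to $\beta(\src)(\lambda i.\,x)(\False)(\_)(f(a))$, which is composition in $B$ along the constant path at $x$ starting from $f(a)$.

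It remains to identify this with $f(a)$ propositionally. The filling operation $\fil$ associated with $\beta$, applied to the constant path at $x$ and starting element $f(a)$, produces an $\I$-indexed element of $B(x)$ that is equal to $f(a)$ at $\src$ and to $\beta(\src)(\lambda i.\,x)(\False)(\_)(f(a))$ at $\tgt$, which is precisely a path witness in $B(x)$ between the two, yielding the desired propositional equality; composing over all $a : A(x)$ and using function extensionality gives the statement. The main obstacle I expect is the bookkeeping that lets the realignment step drop out and that pins down $\phi(x,\src)$ and $\psi(x,\tgt)$ on the nose, since both rely on combining several extension clauses (from $\kw{strictify}$ and from the endpoint behaviour of $\veebar$) in the right order; once these simplifications are in place the rest is routine manipulation of filling along a constant path.
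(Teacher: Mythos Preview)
Your proposal is correct and follows essentially the same approach as the paper: unfold $\kw{realign}$, use that $\forall(i:\I).\,(i=\src \vee i=\tgt)$ is $\False$ to drop back to $\alpha'_{\mathrm{pre}}$, unfold the composition structure transported via Lemma~\ref{isofib}, use the extension clauses of $\kw{strictify}$ and of $\veebar$ to identify the endpoint isomorphisms as $f$ and $\id$, and finish with a filling along the constant path to get the propositional equality with $f(a)$. The only cosmetic differences are notation (you write $\langle\phi,\psi\rangle$ where the paper writes $\fst(iso'),\snd(iso')$, and you name the partial composition structure $\alpha\veebar\beta$) and that you make the final appeal to function extensionality explicit.
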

\begin{proof}
	Take $(A,\alpha),(B,\beta), f, g$ as above and let $(P, \rho) = \kw{isopath}(\langle f,g \rangle )$. By unfolding the constructions used we can see that $\rho$ was obtained by realigning some $\rho_{pre}$, which in turn was obtained by transferring $\beta[\fst]$ across the isomorphism:
	\[\mathit{iso}'\,(x,i) = \snd (\kw{strictify}((A,\beta) \veebar (B,\beta)(x,i,\_), B\,x,(\langle f, g \rangle\veebar\id)\,(x,i,\_))) : P\,x \cong B\, x\]
	Now consider arbitrary $x : \Gamma, a_0 : A\; x$ and note that
	\[ \mathit{iso}'\,(x,\src) = (\langle f, g \rangle \veebar \id)\,(x,\src) = \langle f, g \rangle\,x = (f\,x,g\,x) \]
	and
	\[ \mathit{iso}'\,(x,\tgt) = (\langle f, g \rangle \veebar \id)\,(x,\tgt) = (\id, \id) \]
	Now calculate:
\begin{align*} 
	\null\hspace{-10pt}&\kw{coerce}\; \kw{isopath}(\langle f, g \rangle)\; x\; a_0 \\
	 	 \null\hspace{-10pt}&\quad= \rho\; \src\; \langle x , \id \rangle\; \False\; \elim_{\emptyset}\; a_0  &&\by{unfolding definitions}\footnotemark \\
	 	 \null\hspace{-10pt}&\quad = \rho_{pre}\; \src\; \langle x , \id \rangle\; (\forall i.(i=\src \vee i=\tgt))\; q\; a_0  &&\by{Lemma \ref{realign} (for some $q$)} \\
	 	 \null\hspace{-10pt}&\quad = \rho_{pre}\; \src\; \langle x , \id \rangle\; \False\; \elim_{\emptyset}\; a_0  &&\by{definition of $\forall$} \\
	 	 \null\hspace{-10pt}&\quad= \snd (\mathit{iso}'(x,\tgt))\;(\beta\; \src\; \langle x , \id \rangle\; \False\; \elim_{\emptyset}\; (\fst (\mathit{iso}'(x,\src))\; a_0))  &&\by{Lemma \ref{isofib}} \\
	 	 \null\hspace{-10pt}&\quad= \beta\; \src\; \langle x , \id \rangle\; \False\; \elim_{\emptyset}\; (\fst (\mathit{iso}'(x,\src))\; a_0))  && \text{since $\snd (\mathit{iso}'(x,\tgt)) = id$} \\
	 	 \null\hspace{-10pt}&\quad= \beta\; \src\; \langle x , \id \rangle\; \False\; \elim_{\emptyset}\; (f\; x\; a_0) && \text{since $\fst (\mathit{iso}'(x,\src)) = f\,x$} 
\end{align*}
Since this is merely a trivial/empty composition applied to $f\;x\;a_0$ we can construct a path from $f\;x\; a_0$ to $\coerce\; \kw{isopath}(\langle f, g \rangle)\; x\; a_0$ like so:
\[ \fil\; \src\; \beta\; \langle x , \id \rangle\; \False\; \elim_{\emptyset}\; (f\; x\; a_0) : f\; x\; a_0 \path \coerce\; \kw{isopath}(\langle f, g \rangle)\; x\; a_0 \]
Therefore, coercing along $\mathit{isopath}(\langle f, g \rangle)$ is always propositionally equal to applying $f$.
\end{proof}

\footnotetext{Note that there are different ways to interpret $\coerce$ in the model. This interpretation is not in general the same as the one obtained by directly interpreting Definition \ref{def:coerce}. However, the two interpretations will always be path equal in the model (the other interpretation will have more trivial/empty compositions), and so the result still holds when using the other interpretation.}

\begin{corollary}
	Axioms (4) and (5) hold in the cubical sets model (for the terms constructed in Corollary \ref{cor12cubsets}).
\end{corollary}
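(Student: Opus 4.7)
The plan is a direct invocation of Theorem \ref{coerceiso}. Recall that in Corollary \ref{cor12cubsets}, the paths witnessing $\mathit{unit}$ and $\mathit{flip}$ were defined as $\kw{isopath}(\langle f_{\mathit{u}},g_{\mathit{u}}\rangle)$ and $\kw{isopath}(\langle f_{\mathit{f}},g_{\mathit{f}}\rangle)$, where $\langle f_{\mathit{u}},g_{\mathit{u}}\rangle : A \cong \sum_{a:A} 1$ is the obvious strict isomorphism with forward map $f_{\mathit{u}}(a) = (a,*)$, and $\langle f_{\mathit{f}},g_{\mathit{f}}\rangle : \sum_{a:A}\sum_{b:B} C\,a\,b \cong \sum_{b:B}\sum_{a:A} C\,a\,b$ has forward map $f_{\mathit{f}}(a,b,c) = (b,a,c)$.

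For Axiom (4), apply Theorem \ref{coerceiso} with $\langle f,g\rangle = \langle f_{\mathit{u}},g_{\mathit{u}}\rangle$: it yields a propositional equality between $\coerce\;\mathit{unit}\;a$ and $f_{\mathit{u}}(a) = (a,*)$, which is precisely the statement of Axiom (4). Similarly, for Axiom (5), apply Theorem \ref{coerceiso} with $\langle f,g\rangle = \langle f_{\mathit{f}},g_{\mathit{f}}\rangle$: it provides a propositional equality between $\coerce\;\mathit{flip}\;(a,b,c)$ and $f_{\mathit{f}}(a,b,c) = (b,a,c)$, which is Axiom (5).

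The only thing to check beyond this is that the propositional equalities produced are stable under reindexing, so that they genuinely interpret terms of the type-theoretic identity types $\mathit{coerce}\;\mathit{unit}\;a = (a,*)$ and $\mathit{coerce}\;\mathit{flip}\;(a,b,c) = (b,a,c)$ uniformly in context. The witness exhibited in the proof of Theorem \ref{coerceiso} was $\fil\,\src\,\beta\,\langle x,\id\rangle\,\False\,\elim_{\emptyset}\,(f\,x\,a_0)$, i.e.\ the filler of a trivial composition problem; since $\kw{isopath}$, $\kw{realign}$, $\fil$ and the isomorphism-transfer of Lemma \ref{isofib} are all stable under reindexing (as noted for $\kw{isopath}$ in Corollary \ref{cor12cubsets} and verified explicitly for $\kw{realign}$ in the previous calculation), this witness is stable as well.

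I do not anticipate a significant obstacle: the heavy lifting has already been absorbed into Theorem \ref{coerceiso}, and what remains is a specialisation to the two specific strict isomorphisms of Corollary \ref{cor12cubsets}. The mildest concern is the footnote in Theorem \ref{coerceiso} regarding the choice of interpretation of $\coerce$; however, since any two reasonable interpretations are themselves path-equal, the conclusion carries over unchanged.
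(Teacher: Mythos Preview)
Your proposal is correct and follows exactly the paper's approach: the paper's proof is simply ``By Theorem \ref{coerceiso}'', and your write-up is just a more detailed spelling-out of that same invocation for the two specific strict isomorphisms. The extra remarks on stability under reindexing and the footnote are reasonable elaborations but go beyond what the paper records here.
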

\begin{proof}
	By Theorem \ref{coerceiso}.
\end{proof}


\subsubsection{Axiom (3)}

In light of the previous section, the only axiom remaining is axiom (3). 
Our goal here is, given a contractible fibration $A : \Fib\,\Gamma$, to define a path $\IdU{A}{1}$. Note that, for any $\Gamma : \Univ$, there exists a unique fibration structure $!_1$ such that $(\lambda \_ \to 1 , !_1) : \Fib(\Gamma)$. Therefore we will ambiguously write $1 : \Fib(\Gamma)$ for the pair $(\lambda \_ \to 1 , !_1)$.

\begin{definition}[The contraction of a family]\label{ctype}
	Given a family $A : \Gamma \to \Univ$ we define the contraction of $A$ as
\begin{align*}
	C_A & : \Gamma \times \I \to \Univ\\
	C_A &(x,i) \defeq [ i = \src ] \to A(x) 
\end{align*}
\end{definition}

We now need to show that $C_A$ is fibrant when $A$ is both fibrant and contractible. First, we restate the property of being contractible (Definition \ref{def:contr}) in the internal type theory.

\begin{definition}
	  A type $A$ is said to be \emph{contractible} if it has a centre of
  contraction $a_0 : A$ and every element $a : A$ is propositionally
  equal to $a_0$, that is, there exists a path $a_0 \path a$.
  Therefore a type is contractible if $\Contr A$ is inhabited, where
  ${\Contr} : \Univ \fun \Univ$ is defined by
  \[ 
  \Contr A \defeq (a_0 :A)\times((a:A)\fun {a_0\path a}) 
  \]
  We say that a family
  $A:\Gamma\fun\Univ$ is contractible if each of its fibres is and
  abusively write
  \[ 
  \Contr A \defeq (x:\Gamma)\fun \Contr(A\,x) 
  \]
\end{definition}
Next we recall the notion of an extension structure \cite[Definition 6.4]{PittsAM:aximct}.
\begin{definition}[Extension structures]\label{defi:ext}
  The type of extension structures, ${\Ext} : \Univ \fun \Univ$, is
  given by
  \[
  \Ext A \defeq (\varphi : \Cof)(f : [ \varphi ] \fun A) \fun \{a:A\mid (\varphi,f)\exto a\}
  \]
  Having an extension structure for a type $A:\Univ$ allows us to extend
  any partial element of $A$ to a total element. As before we say that a family
  $A:\Gamma\fun\Univ$ has an extension structure if each of its fibres
  do, and write
  \[ 
  \Ext A \defeq (x : \Gamma) \fun \Ext (A\, x)
  \]
\end{definition}
\begin{lemma}
	Any family $A : \Gamma \to \Univ$ that is both fibrant and contractible is also extendable in the sense of Defintion \ref{defi:ext}.
\end{lemma}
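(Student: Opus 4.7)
The plan is to construct the extension structure by performing a single composition along the contracting homotopy. Fix $x : \Gamma$ and write $(a_0, p_x) \defeq c\,x$ for the centre of contraction $a_0 : A\,x$ and the family of paths $p_x : (a : A\,x) \fun a_0 \path a$ obtained from the assumed $c : \Contr A$. Given a cofibrant $\varphi : \Cof$ and a partial element $f : [\varphi] \fun A\,x$, I would assemble them into a partial $\I$-indexed section
\[ g : [\varphi] \fun \I \fun A\,x, \qquad g\,u\,i \defeq p_x\,(f\,u)\,i. \]
Because $p_x\,(f\,u)$ is a path from $a_0$ to $f\,u$, we have $g\,u\,\src = a_0$ and $g\,u\,\tgt = f\,u$; in particular $a_0$ witnesses the extension condition $(\varphi, g) \conc \src \exto a_0$ required of the base datum of a composition from $\src$.

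Next, I would feed this to the fibration structure on the constant family at $x$. Specialising Definition \ref{def:cchm-fib} to the constant path $\lambda i \fun x : \I \fun \Gamma$ yields $\alpha\,\src\,(\lambda i \fun x) : \Comp\,\src\,(A \comp (\lambda i \fun x))$, and applying it to the data above produces
\[ a_1 \defeq \alpha\,\src\,(\lambda i \fun x)\,\varphi\,g\,a_0 : A\,x \]
satisfying $(\varphi, g) \conc \tgt \exto a_1$. Unfolding, this says $g\,u\,\tgt = a_1$ for all $u : [\varphi]$, and since $g\,u\,\tgt = f\,u$ we obtain $f\,u = a_1$, which is exactly the extension property in Definition \ref{defi:ext}. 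Abstracting over $x, \varphi, f$ gives the required inhabitant of $\Ext A$.

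There is essentially no hard step: the construction is a one-line use of fibration applied to data synthesised from contractibility. The only thing worth attending to is the endpoint convention — the paths $p_x\,a$ must run from $a_0$ at $\src$ to $a$ at $\tgt$ so that the partial path $g$ matches the constant $a_0$ at $\src$ and so that the composition from $\src$ to $\tgt$ produces an element extending $f$; had we used the opposite convention we would compose from $\tgt$ instead, using $\alpha\,\tgt$. Modulo this bookkeeping the proof is immediate and does not require realignment, strictification, or any of the machinery of the previous subsections.
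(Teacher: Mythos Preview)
Your argument is correct: composing along the contracting homotopy over the constant path, with the centre of contraction as the base, produces a total element extending the given partial one. This is precisely the standard proof; the paper itself does not spell it out but simply cites \cite[Lemma~6.6]{PittsAM:aximct}, whose content is exactly the construction you have written down.
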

\begin{proof}
	By \cite[Lemma 6.6]{PittsAM:aximct}.
\end{proof}
Now we can construct a fibrancy structure for $C_A$ as follows:
\begin{theorem}\label{cfib}
	If $(A,\alpha) : \Fib\,\Gamma$ is contractible then we can construct a composition structure for $C_A$.
\end{theorem}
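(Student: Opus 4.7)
The plan is to invoke the preceding lemma to convert contractibility into extendability: we obtain a term $\mathit{ext} : (x : \Gamma) \fun \Ext(A\,x)$, allowing any cofibrantly-defined partial element of each fibre $A\,x$ to be extended to a total element. I would then build the composition structure for $C_A$ pointwise in the $u$-argument, using $\mathit{ext}$ at the target endpoint of each composition problem.

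Concretely, fix $e : \Bool$ and $p : \I \fun \Gamma \times \I$, and write $p_1 \defeq \fst \circ p$, $p_2 \defeq \snd \circ p$, so that $(C_A \circ p)(i) = [p_2(i) = \src] \fun A(p_1(i))$. A composition problem consists of $\varphi : \Cof$, $f : [\varphi] \fun \Pi_\I (C_A \circ p)$ and a compatible base $a_0 : (C_A \circ p)(e)$; the required output lives in $[p_2(\inv e) = \src] \fun A(p_1(\inv e))$. I would define
\[
\gamma\,e\,p\,\varphi\,f\,a_0 \defeq \lambda u : [p_2(\inv e) = \src].\; \fst\bigl(\mathit{ext}(p_1(\inv e))\,\varphi\,(\lambda v \fun f\,v\,\inv e\,u)\bigr)
\]
so that, pointwise at each $u$, the output is the extension of the cofibrantly-defined partial element $\lambda v : [\varphi].\,f\,v\,\inv e\,u$ of $A(p_1(\inv e))$ to a total element.

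The side condition $(\varphi, f) \conc \inv e \exto \gamma\,e\,p\,\varphi\,f\,a_0$ asks that for every $v : [\varphi]$ this output agrees with $f\,v\,\inv e$; by function extensionality this reduces to pointwise equality at each $u$, which is precisely the defining property of $\mathit{ext}$. Note that the base $a_0$ is not used in the construction, which is consistent with Definition \ref{def:cchm-fib}: the output of $\Comp$ is only required to agree with the sides $f$ at $\inv e$, not with $a_0$. The main thing left to worry about is that this pointwise construction genuinely assembles into an element of $\isFib\,C_A$, i.e.\ is stable under reindexing of $\Gamma$; but since $\mathit{ext}$, projections, and cofibrant restriction are each stable under reindexing, so is $\gamma$.
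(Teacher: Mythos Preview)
Your proposal is correct and follows essentially the same approach as the paper: obtain an extension structure from fibrancy plus contractibility via the preceding lemma, then define the composition for $C_A$ pointwise in the $u$-argument by extending the partial element $\lambda v.\,f\,v\,\inv e\,u$ of $A(p_1(\inv e))$. Your explicit decomposition $p = \langle p_1, p_2 \rangle$ and the remarks on function extensionality and reindexing are reasonable elaborations; the only cosmetic point is that the output of $\Ext$ is a comprehension type $\{a:A \mid \ldots\}$, so the outer $\fst$ is not needed in the paper's conventions.
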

\begin{proof}
Take $(A,\alpha) : \Fib\,\Gamma$ as above. Since $A$ is both fibrant and contractible then we can construct an extension structure $\epsilon : \Ext A$. We can then define a composition structure $c_{\alpha} : \isFib(C_A)$ like so:
\[ (c_\alpha\; e\; p\; \varphi\; f\; c_0)\; u \defeq \epsilon\; \varphi\; (\lambda v \to f\, v\, \inv{e}\, u) \]
for $u : [\snd (p\; \inv{e}) = \src]$. Given $v : [ \varphi ]$ we have:
\[ c_\alpha\; e\; p\; \varphi\; f\; c_0 = \lambda u \to \epsilon\; \varphi\; (\lambda v \to f\, v\, \inv{e}\, u)
  = \lambda u \to f\, v\, \inv{e}\, u = f\, v\, \inv{e} \]
 as required. Therefore we have a defined a valid composition operation for $C_A$.
\end{proof}


\begin{theorem}
	There exists a function
	  \[ \kw{contract} : \{\Gamma : \Univ\}(A : \Fib\,\Gamma) \fun \Contr\;A \fun \IdU{A}{1} \]
\end{theorem}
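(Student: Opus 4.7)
The plan is to take the family $C_A$ of Definition \ref{ctype}, equip it with the composition structure supplied by Theorem \ref{cfib}, exhibit isomorphisms between its two fibres over $\I$ and the fibrations $A$ and $1$, and then invoke $\kw{improve}$ from Lemma \ref{improve} to turn this data into a proper element of $\IdU{A}{1}$.

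Concretely, given $A : \Fib\,\Gamma$ and a proof $c : \Contr\,A$, Theorem \ref{cfib} provides a composition structure $c_\alpha : \isFib\,C_A$, so $(C_A, c_\alpha) : \Fib(\Gamma \times \I)$. I then construct the endpoint isomorphisms. At the source, $C_A(x,\src) = [\src = \src] \fun A\,x$; since $\src = \src$ is trivially provable, $[\src = \src]$ is canonically isomorphic to the terminal type, and hence $C_A(x,\src) \cong A\,x$ via $a \mapsto \lambda\_\fun a$ and $f \mapsto f\,*$. At the target, $C_A(x,\tgt) = [\tgt = \src] \fun A\,x$; by the assumed disjointness of the endpoints of $\I$, $[\tgt = \src]$ is empty, so this function type has exactly one element and $C_A(x,\tgt) \cong 1$. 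Call the resulting isomorphisms $\mathit{iso}_\src : A \cong (C_A, c_\alpha)[\langle\id,\src\rangle]$ and $\mathit{iso}_\tgt : 1 \cong (C_A, c_\alpha)[\langle\id,\tgt\rangle]$.

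The triple $((C_A, c_\alpha), \mathit{iso}_\src, \mathit{iso}_\tgt)$ is then an element of $\IdCong{A}{1}$, so I would define $\kw{contract}(A)(c) \defeq \kw{improve}((C_A, c_\alpha), \mathit{iso}_\src, \mathit{iso}_\tgt)$. Stability of this construction under reindexing, needed to interpret axiom (3) in the model, follows from the pointwise definitions of $C_A$ and of the endpoint isomorphisms, together with the reindexing stability of $\kw{improve}$ (and ultimately of $\kw{realign}$), in the same spirit as Corollary \ref{cor12cubsets}.

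The main substantive step was already discharged by Theorem \ref{cfib}: transferring the extension structure of a contractible fibration into a composition structure on $C_A$ is the only non-trivial ingredient. The remaining work is assembly, the sole subtlety being the use of disjointness of $\src$ and $\tgt$ to collapse $C_A(x,\tgt)$ to the terminal type.
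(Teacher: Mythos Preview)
Your proposal is correct and follows essentially the same route as the paper: form the fibration $(C_A,c_\alpha)$ via Theorem \ref{cfib}, observe the endpoint isomorphisms $C_A(-,\src)\cong A$ and $C_A(-,\tgt)\cong 1$ coming from $[\src=\src]\cong 1$ and $[\tgt=\src]\cong\emptyset$, and apply $\kw{improve}$ to the resulting misaligned path. Your additional remark on reindexing stability matches what the paper records immediately after the theorem.
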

\begin{proof}
Given $\Gamma : \Univ$, $(A , \alpha) : \Fib\;\Gamma$ and $\epsilon : \Contr\;A$, we obverse that
		\[C_A[\langle \id, \src \rangle](x) = C_A(x,\src) = [ \src = \src ] \fun A(x) \cong 1 \fun A(x) \cong A(x) \]
	and
		\[C_A[\langle \id, \tgt \rangle](x) = C_A(x,\tgt) = [ \tgt = \src ] \fun A(x) \cong \emptyset \fun A(x) \cong 1 \]
Therefore we have $((C_A,c_\alpha),iso_A,iso_1) : \IdCong{A}{1}$ where $iso_A : A \cong C_A[\langle \id, \src \rangle]$ and $iso_1 : 1 \cong C_A[\langle \id, \tgt \rangle]$ are the obvious isomorphisms indicated above.
Hence we can define
\[\kw{contract}((A,\alpha),\epsilon) \defeq \kw{improve}((C_A,c_\alpha),iso_A,iso_1) : \IdU{(A,\alpha)}{1}\]
as required.
\end{proof}

\begin{corollary}
	Cubical type theory with the cubical sets model supports axiom (3).
\end{corollary}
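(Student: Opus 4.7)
The plan is to observe that the hard work has already been done: the previous theorem constructs, for any contractible fibration $(A,\alpha) : \Fib\,\Gamma$, an element of $\IdU{A}{1}$, which is precisely the internal representation of a propositional equality between types in the universe (as was established in the discussion following Definition \ref{defi:patebf}). So all that remains is to bridge the gap between this internal representation and an honest inhabitant of the identity type on the universe that axiom (3) demands.

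Concretely, I would proceed as follows. First, unfold what axiom (3) asks for in the cubical sets model: a term of type $\prod_{A : \U_i} \mathit{isContr}(A) \to (A = 1)$, interpreted semantically. Using the correspondence sketched just after Definition \ref{defi:patebf}, a term of $Ter(\Gamma \vdash \Path\,\U\,\ulcorner A \urcorner\, \ulcorner 1 \urcorner)$ is equivalent data to an element of $\IdU{A}{1}$ at $\Gamma$. Similarly, the internal notion $\Contr\,A$ matches the interpretation of $\mathit{isContr}$ in the cubical sets model. Therefore the function $\kw{contract}$ constructed in the previous theorem, after re-encoding via this correspondence, directly provides the required term.

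Second, I would check naturality/stability under reindexing, so that we really get a term of the universal dependent function type and not merely a fibrewise construction. This amounts to verifying that each ingredient used in building $\kw{contract}$ is stable under substitution: the construction of $C_A$ is clearly so, the extension structure $\epsilon$ derived from contractibility is stable (cf.\ \cite[Lemma 6.6]{PittsAM:aximct}), the composition operation $c_\alpha$ from Theorem \ref{cfib} is defined fibrewise, the isomorphisms $iso_A$ and $iso_1$ are canonical, and $\kw{improve}$ is stable because both $\kw{strictify}$ (Theorem \ref{strictify-pre}) and $\kw{realign}$ (Lemma \ref{realign}, whose stability we verified explicitly) are stable under reindexing. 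Hence the whole construction assembles into a well-typed term witnessing axiom (3).

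The only subtlety, and the main potential obstacle, is the passage between $\IdU{A}{1}$ and an actual element of the identity type on $\U$: one has to be careful that the terminal fibration used on the right is the same one that interprets the type constant $1$ in the model, but this follows since the terminal presheaf carries a unique composition structure, as noted at the start of the subsubsection on axiom (3). Modulo this bookkeeping, the corollary follows immediately from the preceding theorem.
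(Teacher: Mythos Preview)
Your proposal is correct and takes essentially the same approach as the paper: invoke the preceding theorem's construction of $\kw{contract}$ and then check stability under reindexing. The paper's own treatment is even terser than yours---it simply remarks that the only non-fibrewise ingredient is $\kw{realign}$, whose stability was already established---while you spell out more of the intermediate stability checks and the passage from $\IdU{A}{1}$ to the identity type on $\U$; but the content is the same.
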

As in Corollary \ref{cor12cubsets} we need to check that $\kw{contract}$ is stable under reindexing (substitution). This holds for the same reasons as before, namely that the only non fibrewise construction used in the definition of $\kw{contract}$ is $\kw{realign}$ which we previously showed to be stable under reindexing.

\section{An application to an open problem in type theory}\label{sec:open}


In Section \ref{sec:prelim} we defined $\funext$ to be the principle which says that two functions $f, g : \prod_{x : A} B(x)$ are equal if they are pointwise equal: $f \sim g \defeq \prod_{x : A} f(x) = g(x)$. That is, we assumed the existence of a term:
\begin{align*}
	\funext_{i,j}:&\prod_{A : \U_i}\;\prod_{B : A \to \U_j }\;\prod_{f, g : \Pi_{x : A}B(x)} f \sim g \to f = g
\end{align*}
for all universe levels $i, j$. This is similar to the statement of naive univalence, $\UA$, from Definition \ref{def:naive-univ} and we call this principle naive function extensionality.

As with proper univalence (Definition \ref{def:proper-univ}), we could have instead stated that the canonical map $\happly : (f = g) \to f \sim g$ is an equivalence. In fact, these two formulations turn out to be equivalent.

\begin{theorem}[due to Voevodsky]\label{funext}
Naive function extensionality is logically equivalent to the proper function extensionality axiom. That is, the existence of a term:
\begin{align*}
	\funext_{i,j}:&\prod_{A : \U_i}\;\prod_{B : A \to \U_j }\;\prod_{f, g : \Pi_{x : A}B(x)} f \sim g \to f = g
\end{align*}
is logically equivalent to the statement that, for all types $A :\U_i$, $B : A \to \U_j$ and maps $f, g : \prod_{x : A} B(x)$, the map $\happly : (f = g) \to (f \sim g)$ is an equivalence.
\end{theorem}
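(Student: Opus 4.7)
The backward direction is immediate: given that $\happly$ is an equivalence, composing its (quasi-)inverse with the assignment of pointwise equalities yields a term of the $\funext$ type.

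For the forward direction I would follow Voevodsky's classical two-step argument, proceeding via \emph{weak function extensionality}. \emph{Step 1:} derive weak funext from naive funext, i.e.\ show that for every $A : \U_i$ and $P : A \to \U_j$ such that each $P(x)$ is contractible, the dependent product $\prod_{x:A} P(x)$ is itself contractible. Given centres $c(x) : P(x)$ and contractions $\kappa(x) : \prod_{p : P(x)} c(x) = p$, take $\lambda x.\, c(x)$ as the centre of contraction; for any $g : \prod_{x:A} P(x)$, the pointwise equality $\lambda x.\, \kappa(x)(g\,x) : \lambda x.\, c(x) \sim g$ is converted by $\funext$ into an identity $\lambda x.\, c(x) = g$.

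\emph{Step 2:} fix $f : \prod_{x:A} B(x)$ and compare two total spaces over $\prod_{x:A} B(x)$. On the one hand, $\sum_{g} (f = g)$ is $sing(f)$ and is contractible by path induction. On the other hand, $\sum_{g : \prod_{x:A} B(x)} (f \sim g)$ is (definitionally) isomorphic via the Pi-Sigma swap to $\prod_{x:A} \sum_{b : B(x)} (f(x) = b) = \prod_{x:A} sing(f(x))$, a dependent product of contractible types, which is contractible by Step 1. The total map $(g, p) \mapsto (g, \happly(p))$ between these two total spaces is then a map between contractible types, hence an equivalence. Since it acts as the identity on the first coordinate, it is a fibrewise transformation of type families over $\prod_{x:A} B(x)$; by the standard fact that such a total map is an equivalence iff each fibrewise component is \cite[Theorem 4.7.7]{HoTT}, every $\happly : (f = g) \to (f \sim g)$ is an equivalence, as required.

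The main obstacle will be Step 2, specifically establishing the contractibility of $\sum_{g} (f \sim g)$. This depends on commuting a dependent product with a dependent sum (which is a definitional isomorphism in Martin-L\"of type theory with $\eta$) and then invoking weak funext to collapse a family of singletons; without naive funext, the Pi-Sigma swap alone does not suffice. Once this contractibility is in place, the remainder reduces to well-known facts about maps between contractible types and the fibrewise-equivalence principle.
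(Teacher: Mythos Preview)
Your proposal is correct and follows essentially the same route as the paper, which simply cites the HoTT book: naive funext gives weak funext (your Step~1, \cite[Definition 4.9.1]{HoTT}), and weak funext gives proper funext (your Step~2, \cite[Theorem 4.9.5]{HoTT}). Your Step~2 uses the fibrewise-equivalence criterion \cite[Theorem 4.7.7]{HoTT} rather than the explicit retraction argument the HoTT book gives for 4.9.5, but this is a minor and well-known variant of the same idea.
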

\begin{proof}
For the forwards direction: assuming $\funext$ as above, it is easy to derive a proof of \emph{weak function extensionality} \cite[Definition 4.9.1]{HoTT}. This in turn implies the proper function extensionality axiom by \cite[Theorem 4.9.5]{HoTT}. The reverse direction follows trivially.
\end{proof}

Compare this result with Theorem \ref{naiveuniv} where we saw that naive univalence with a computation rule is logically equivalent to the proper univalence axiom. In the case of function extensionality we did not need to assume any sort of computation rule about $\funext$. Therefore an obvious question is whether this computation rule is in fact necessary in the case of univalence, or whether, as is the case with function extensionality, it is in fact redundant.

\begin{conjecture}\label{prop}
Naive univalence implies the proper univalence axiom. That is, given $\UA_i$,
it follows that for all types $A, B : \U_i$ the map $\idtoeqv : (A = B) \to (\Equiv{A}{B})$ is an equivalence.
\end{conjecture}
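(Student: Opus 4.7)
My plan is to adapt the strategy of Theorem \ref{funext}, where naive function extensionality was derived to imply its proper form via an intermediate weak principle. The first reduction is via Theorem \ref{naiveuniv} and Lemma \ref{lemma:id-retract}: it suffices to produce, from any $\ua : \UA_i$, a (possibly different) map $\ua' : \UA_i$ together with a proof that $\idtoeqv \circ \ua'$ is pointwise equal to the identity, since exhibiting such a section is, by that lemma, enough to make $\idtoeqv$ an equivalence. Through Corollary \ref{cor} this is in turn equivalent to showing that the specific $\mathit{unit}$ and $\mathit{flip}$ paths constructed from $\ua$ in the proof of Theorem \ref{ua} satisfy the corresponding $\coerce$-computation rules, i.e.\ axioms (4) and (5). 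Thus the entire conjecture reduces to whether naive univalence alone forces the ``trivial'' coercion laws $\coerce\;\mathit{unit}\;a = (a,*)$ and $\coerce\;\mathit{flip}\;(a,b,c) = (b,a,c)$.

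Next I would try to isolate a candidate \emph{weak univalence} principle---ideally one that is easily seen to follow from $\UA_i$ and to imply the full univalence axiom. A natural first candidate is contractibility of the type $\sum_{B : \U_i} (A \simeq B)$ at the centre $(A, \id_A)$: naive univalence immediately yields a propositional equality from any $(B, (f,e))$ to this centre by transporting along $\inv{\ua(f,e)}$, and full contractibility of this sum is well known to be equivalent to proper univalence. A second, more direct attempt would be to construct the desired $\ua'$ by cancellation, setting for example $\ua'(f,e) \defeq \inv{\ua(\idtoeqv(\ua(\id_A, e_{\id})))} \ct \ua(f,e)$, and using the groupoid laws, function extensionality, and the fact that $\isEquiv{f}$ is a mere proposition for every $f$, to try to reduce $\idtoeqv(\ua'(f,e))$ to $(f,e)$.

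The main obstacle---and presumably the reason the statement is only conjectured---is that naive univalence asserts only the \emph{existence} of a map $\ua$ without constraining its interaction with $\idtoeqv$. The analogy with function extensionality breaks at the step where weak funext supplies a fibrewise contraction of path spaces over the codomain; for universes no comparable fibrewise device is available without essentially presupposing proper univalence. I therefore expect that a successful proof will either require a genuinely new syntactic trick---perhaps exploiting universe polymorphism, size manipulations, or an internal encoding of equivalences---or, failing that, that the conjecture is false and one can construct a counter-model establishing independence from intensional Martin-L\"of type theory with function extensionality.
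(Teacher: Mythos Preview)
The statement you are attempting is labelled a \emph{conjecture} in the paper, and the authors explicitly say its status is unknown; there is no proof to compare against. What the paper does instead is exactly your first paragraph: it reduces the question to whether axioms (1)--(3) force (some version of) axioms (4) and (5), and records this as a second open conjecture equivalent to the first. So your reduction is correct and matches the paper, but it is a reformulation, not progress toward a proof. One small slip: you speak of ``the specific $\mathit{unit}$ and $\mathit{flip}$ paths constructed from $\ua$'', whereas the paper's equivalent conjecture allows possibly different witnesses of (1) and (2); this extra freedom matters, since the given $\ua$ might produce badly behaved paths while better ones still exist.

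Your second paragraph, however, contains a genuine gap. You claim that naive univalence ``immediately yields'' a path from $(B,(f,e))$ to $(A,\id_A)$ in $\sum_{B:\U_i}(A\simeq B)$ by using $\opp{\ua(f,e)}$. It does not. A path in that $\Sigma$-type over $\opp{\ua(f,e)}:B=A$ requires additionally that the transport of $(f,e)$ along $\opp{\ua(f,e)}$ equal $\id_A$, which unwinds to $\idtoeqv(\ua(f,e))=(f,e)$ --- precisely the computation rule $\UA\beta$ you are trying to avoid assuming. So this first candidate is circular, and since contractibility of $\sum_{B}(A\simeq B)$ is already known to be equivalent to proper univalence, no shortcut of this shape can work without new input. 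Your cancellation idea has the same defect: evaluating $\idtoeqv(\ua'(f,e))$ still requires controlling $\idtoeqv\circ\ua$ on at least one nontrivial equivalence. Your final paragraph correctly diagnoses why the funext analogy breaks down; that diagnosis is sound, but it means the proposal is a survey of obstacles rather than a proof.
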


To the authors' best knowledge the status of Conjecture \ref{prop} is currently unknown. It is certainly not inconsistent since there are models where naive univalence fails to hold, such as the $\mathit{Set}$-valued model \cite{HofmannM:synsdt}, and models where full univalence holds, such as the cubical sets model \cite{CoquandT:cubttc}. However it is not clear whether Conjecture \ref{prop} is either a theorem of type theory, cf.~the case with function extensionality, or whether there are models which validate $\UA$ but which do not validate the proper univalence axiom.

The work presented here may offer an approach to tackling this problem, by reducing it to the following:

\begin{conjecture}\label{conj2}
Axioms (1)-(3) imply axioms (4)-(5), for possibly modified $\mathit{unit}$ and $\mathit{flip}$. That is, if for all $A, B : \U_i$, $C : A \rightarrow B \rightarrow \U_i$ we have:
\[
  A = \sum_{a : A} 1 \qquad\text{}\qquad 
  \sum_{a:A}\sum_{b:B} C\; a\; b = \sum_{b:B}\sum_{a:A} C\; a\; b \qquad\text{}\qquad 
  isContr(A) \to A = 1
\]
then there exist terms $\mathit{unit}$ and $\mathit{flip}$, with types as in Table \ref{table:axioms}, for which the following equalities hold:
\[\coerce\; \mathit{unit}\; a = (a , *) \qquad\text{}\qquad \coerce\; \mathit{flip}\; (a,b,c) = (b , a , c) \]
for all $a:A$, $b:B$ and $c:C\, a\, b$.
\end{conjecture}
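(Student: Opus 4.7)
The natural plan is to promote the naive univalence derived from axioms (1)-(3) via Theorem \ref{ua} into equalities that satisfy the required computation rules. Write $\ua : \UA_i$ for the term constructed in that theorem, and take as candidates $\mathit{unit}' \defeq \ua(i)$ and $\mathit{flip}' \defeq \ua(\sigma)$, where $i : A \to \sum_{a:A} 1$ is the canonical equivalence $a \mapsto (a,*)$ and $\sigma$ is the obvious $\Sigma$-swap equivalence; both are readily verified to be equivalences. The conjecture then reduces to showing $\coerce(\ua(i))(a) = (a,*)$ and the analogous equality for $\sigma$, which is precisely $\UA\beta$ at two specific equivalences.

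To attempt the verification, I would unfold the five-step composition that realizes $\ua(i)$ in the proof of Theorem \ref{ua} and use Lemma \ref{lem} to split $\coerce(\ua(i))$ into five coercions. Three of the five links use axiom (3); here the situation is relatively benign, because for a contractible $T$ the maps $T \to 1$ and $1 \to T$ are uniquely determined up to propositional equality, so whatever $\coerce$ does along such an equality must agree with the canonical map up to a path. The two remaining links are primitive uses of axioms (1) and (2), and this is where the argument stalls: without axioms (4) and (5) as hypotheses we know only that these coercions are some equivalences, not the canonical ones.

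The main obstacle is therefore to extract a weak computation principle for axioms (1) and (2) from axioms (1)-(3) alone. The most promising route I see is to exploit that $i$ and $\sigma$ are strict isomorphisms with definitional inverses, a structural feature that drives the cubical argument of Section \ref{sec:cubical}. One could try to establish an internal "injectivity" of coercion: given $p, q : A = B$ with $\coerce(p)(a) = \coerce(q)(a)$ for all $a : A$, conclude that $p = q$. Combined with Theorem \ref{ua} applied once to $i$ and once to $\sigma$, this would let us replace any given $\mathit{unit}$ and $\mathit{flip}$ by ones whose coercions are forced to be canonical. The difficulty, which is presumably why Conjecture \ref{conj2} remains open, is that proving any such injectivity principle—or indeed any equivalently strong weak-$\UA\beta$—appears to already require essentially the proper univalence we are trying to derive, so the reduction does not obviously trade our problem for an easier one.
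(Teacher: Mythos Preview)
The statement you are addressing is labelled as a \emph{conjecture} in the paper, not a theorem: the paper provides no proof of it, and indeed shows (in the theorem immediately following) that Conjecture~\ref{conj2} is logically equivalent to Conjecture~\ref{prop}, whose status the authors describe as currently unknown. So there is no proof in the paper to compare your proposal against.

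Your write-up is in fact consistent with this. You do not claim a proof; you outline the natural strategy (bootstrap $\ua$ from axioms (1)--(3) via Theorem~\ref{ua}, then try to verify $\UA\beta$ at the two specific equivalences $i$ and $\sigma$), correctly identify that the axiom-(3) links are harmless because contractible types have essentially unique maps to and from $1$, and then pinpoint the genuine obstruction: without axioms (4) and (5), the coercions along the primitive $\mathit{unit}$ and $\mathit{flip}$ equalities are only known to be \emph{some} equivalences, not the canonical ones. Your closing remark that the needed injectivity-of-coercion principle appears to already encode proper univalence is exactly why the paper leaves this open. In short, your analysis matches the paper's stance; just be explicit that what you have produced is a discussion of obstacles, not a proof attempt that fails.
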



\begin{theorem}
	In the presence of function extensionality, Conjecture \ref{prop} and Conjecture \ref{conj2} are logically equivalent.
\end{theorem}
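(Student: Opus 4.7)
The plan is a direct diagram chase through the chain of logical equivalences already assembled in Section \ref{sec:axi}, using Theorems \ref{naiveuniv}, \ref{ua}, \ref{main} and Corollary \ref{cor}.

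For the forward direction (Conjecture \ref{prop} $\Rightarrow$ Conjecture \ref{conj2}), I would assume axioms (1)--(3) and aim to produce modified $\mathit{unit}$ and $\mathit{flip}$ satisfying (4) and (5). Apply Theorem \ref{ua} to extract a term $\ua : \UA_i$; then Conjecture \ref{prop} yields proper univalence, and the forward half of Theorem \ref{naiveuniv} furnishes a pair $(\ua', \ua\beta)$ with $\ua\beta : \UA\beta_i(\ua')$. Feeding this pair into the forward half of Theorem \ref{main} produces witnesses for all of axioms (1)--(5); in particular, new $\mathit{unit}$ and $\mathit{flip}$ for which (4) and (5) hold, which is exactly the conclusion of Conjecture \ref{conj2}.

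For the reverse direction (Conjecture \ref{conj2} $\Rightarrow$ Conjecture \ref{prop}), assume naive univalence $\UA_i$. The reverse half of Theorem \ref{ua} supplies witnesses for axioms (1)--(3); Conjecture \ref{conj2} then upgrades the $\mathit{unit}$ and $\mathit{flip}$ components to ones additionally satisfying (4) and (5). Having witnesses for all of (1)--(5), Corollary \ref{cor} delivers the proper univalence axiom, which is the conclusion of Conjecture \ref{prop}.

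Since each intermediate step is a logical equivalence proved earlier in the paper, there is no deeper mathematical obstruction. The only subtlety is bookkeeping: the ``possibly modified'' clause in Conjecture \ref{conj2} is exactly what lets us discard the original witnesses for (1) and (2) and replace them with the ones produced by Theorem \ref{main} via the round trip through $(\ua', \ua\beta)$, so that the computation rules (4) and (5) hold on the nose.
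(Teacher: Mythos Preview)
Your argument is correct and follows essentially the same chain as the paper's proof: in both directions you pass through Theorem~\ref{ua} and then invoke the equivalence between axioms (1)--(5) and proper univalence (the paper cites Corollary~\ref{cor} directly, while you unpack it into Theorems~\ref{naiveuniv} and~\ref{main}, which amounts to the same thing). One minor quibble: you have the labels ``forward'' and ``backward'' reversed for Theorems~\ref{naiveuniv} and~\ref{main}---going from proper univalence to $(\ua',\ua\beta)$ and then to axioms (1)--(5) uses the \emph{reverse} halves of those biconditionals---but the logic is sound.
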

\begin{proof}
	For the forwards direction, assume function extensionality, \ref{prop} and axioms (1)-(3). By Theorem \ref{ua} we deduce that naive univalence, $\UA_i$, holds. Therefore by our assumption of \ref{prop} we deduce the proper univalence axiom for $\U_i$. Hence, by Corollary \ref{cor}, we deduce axioms (1)-(5) (possibly with different proof terms than our existing assumptions of axioms (1)-(3)). Therefore the conclusion of \ref{conj2} holds. 
	
	For the reverse direction, assume function extensionality, \ref{conj2} and naive univalence. By Theorem \ref{ua} we deduce that axioms (1)-(3) hold. Therefore by our assumption of \ref{conj2} we deduce axioms (4)-(5) also hold. Hence, by Corollary \ref{cor}, we deduce the proper univalence axiom.
\end{proof}

This result may be useful in tackling the open question of whether Conjecture \ref{prop} is a theorem of type theory, or whether there are in fact models in which it does not hold. This is because finding models where the conclusions of Conjecture \ref{conj2} do not hold given the assumptions, or showing that no such models exist, seems an easier task. For example, consider the case where the first conclusion fails, that is, where $\coerce\, \mathit{unit}\, a \neq (a,*)$ for some $A : \U$ and $a : A$. If this is the case then we have $\fst \circ (\coerce\; \mathit{unit}) : \{A : \U\} \to A \to A$ which is not equal to the identity function. We note that the existence of such a term has interesting consequences relating to parametricity and excluded middle \cite{BooijELS17}, and potentially informs our search about the type of models which might invalidate Conjecture \ref{prop}. However, we leave further investigation of this problem to future work.

\paragraph{Acknowledgements}
We would like to thank the anonymous referees for their insightful comments which materially improved the quality of the paper.

\bibliographystyle{plain}

\begin{thebibliography}{15}
\providecommand{\natexlab}[1]{#1}
\providecommand{\url}[1]{\texttt{#1}}
\expandafter\ifx\csname urlstyle\endcsname\relax
  \providecommand{\doi}[1]{doi: #1}\else
  \providecommand{\doi}{doi: \begingroup \urlstyle{rm}\Url}\fi

\bibitem[{Agda Project}()]{Agda}
{Agda Project}.
\newblock \url{wiki.portal.chalmers.se/agda}.

\bibitem[Angiuli et~al.(2017)Angiuli, Brunerie, Coquand, (Favonia), Harper, and
  Licata]{angiulicartesian}
C.~Angiuli, G.~Brunerie, T.~Coquand, K.-B.~Hou (Favonia), R.~Harper, and D.~R.
  Licata.
\newblock Cartesian cubical type theory.
\newblock
  \url{https://github.com/dlicata335/cart-cube/blob/master/cart-cube.pdf},
  December 2017.

\bibitem[Awodey(2016)]{awodey2016cubical}
Steve Awodey.
\newblock A cubical model of homotopy type theory.
\newblock \emph{arXiv preprint arXiv:1607.06413}, 2016.
\newblock \url{https://arxiv.org/abs/1607.06413}.

\bibitem[Balbes and Dwinger(1975)]{balbes1977distributive}
R.~Balbes and P.~Dwinger.
\newblock \emph{Distributive Lattices}.
\newblock University of Missouri Press, 1975.

\bibitem[Bezem et~al.(2014)Bezem, Coquand, and Huber]{bezem2014model}
Marc Bezem, Thierry Coquand, and Simon Huber.
\newblock A model of type theory in cubical sets.
\newblock In \emph{19th International Conference on Types for Proofs and
  Programs (TYPES 2013)}, volume~26, pages 107--128, 2014.

\bibitem[Birkedal et~al.(2016)Birkedal, Bizjak, Clouston, Grathwohl, Spitters,
  and Vezzosi]{birkedal2016guarded}
Lars Birkedal, Ale{\v{s}} Bizjak, Ranald Clouston, Hans~Bugge Grathwohl, Bas
  Spitters, and Andrea Vezzosi.
\newblock {Guarded Cubical Type Theory: Path Equality for Guarded Recursion}.
\newblock In Jean-Marc Talbot and Laurent Regnier, editors, \emph{25th EACSL
  Annual Conference on Computer Science Logic (CSL 2016)}, volume~62 of
  \emph{Leibniz International Proceedings in Informatics (LIPIcs)}, pages
  23:1--23:17, Dagstuhl, Germany, 2016. Schloss Dagstuhl--Leibniz-Zentrum fuer
  Informatik.
\newblock ISBN 978-3-95977-022-4.
\newblock \doi{10.4230/LIPIcs.CSL.2016.23}.
\newblock URL \url{http://drops.dagstuhl.de/opus/volltexte/2016/6563}.

\bibitem[Booij et~al.(2017)Booij, Escard{\'{o}}, Lumsdaine, and
  Shulman]{BooijELS17}
Auke~Bart Booij, Mart{\'{\i}}n~H{\"{o}}tzel Escard{\'{o}}, Peter~LeFanu
  Lumsdaine, and Michael Shulman.
\newblock Parametricity, automorphisms of the universe, and excluded middle.
\newblock \emph{arXiv preprint arXiv:1701.05617}, 2017.
\newblock \url{http://arxiv.org/abs/1701.05617}.

\bibitem[Cohen et~al.(2018)Cohen, Coquand, Huber, and
  M{\"o}rtberg]{CoquandT:cubttc}
Cyril Cohen, Thierry Coquand, Simon Huber, and Anders M{\"o}rtberg.
\newblock {Cubical Type Theory: A Constructive Interpretation of the Univalence
  Axiom}.
\newblock In Tarmo Uustalu, editor, \emph{21st International Conference on
  Types for Proofs and Programs (TYPES 2015)}, volume~69 of \emph{Leibniz
  International Proceedings in Informatics (LIPIcs)}, pages 5:1--5:34,
  Dagstuhl, Germany, 2018. Schloss Dagstuhl--Leibniz-Zentrum fuer Informatik.
\newblock ISBN 978-3-95977-030-9.
\newblock \doi{10.4230/LIPIcs.TYPES.2015.5}.
\newblock URL \url{http://drops.dagstuhl.de/opus/volltexte/2018/8475}.

\bibitem[Dybjer(1996)]{DybjerP:intt}
P.~Dybjer.
\newblock Internal type theory.
\newblock In S.~Berardi and M.~Coppo, editors, \emph{Types for Proofs and
  Programs}, volume 1158 of \emph{Lecture Notes in Computer Science}, pages
  120--134. Springer Berlin Heidelberg, 1996.
\newblock ISBN 978-3-540-61780-8.

\bibitem[Hofmann(1997)]{HofmannM:synsdt}
M.~Hofmann.
\newblock Syntax and semantics of dependent types.
\newblock In A.~M. Pitts and P.~Dybjer, editors, \emph{Semantics and Logics of
  Computation}, Publications of the Newton Institute, pages 79--130. Cambridge
  University Press, 1997.

\bibitem[Hofmann and Streicher(1997)]{HofmannM:lifgu}
M.~Hofmann and T.~Streicher.
\newblock Lifting {Grothendieck} universes.
\newblock Unpublished note,
  \url{https://www2.mathematik.tu-darmstadt.de/~streicher/NOTES/lift.pdf},
  1997.

\bibitem[Maietti(2005)]{MaiettiME:modcdt}
M.~E. Maietti.
\newblock Modular correspondence between dependent type theories and categories
  including pretopoi and topoi.
\newblock \emph{Mathematical Structures in Computer Science}, 15:\penalty0
  1089--1149, 2005.

\bibitem[Orton and Pitts(2016)]{PittsAM:aximct}
I.~Orton and A.~M. Pitts.
\newblock Axioms for modelling cubical type theory in a topos.
\newblock In \emph{Proc.~{CSL} 2016}, volume~62 of \emph{LIPIcs}, pages
  24:1--24:19, 2016.

\bibitem[Pitts(2015)]{pitts2015nominal}
Andrew~M Pitts.
\newblock Nominal presentation of cubical sets models of type theory.
\newblock In \emph{LIPIcs-Leibniz International Proceedings in Informatics},
  volume~39. Schloss Dagstuhl-Leibniz-Zentrum fuer Informatik, 2015.

\bibitem[{Univalent Foundations Program}(2013)]{HoTT}
The {Univalent Foundations Program}.
\newblock \emph{Homotopy Type Theory: Univalent Foundations for Mathematics}.
\newblock \url{http://homotopytypetheory.org/book}, Institute for Advanced
  Study, 2013.

\end{thebibliography}

\end{document}